\providecommand{\U}[1]{\protect\rule{.1in}{.1in}}
\newtheorem{theorem}{Theorem}
\newtheorem{fact}{Fact}
\newtheorem{definition}{Definition}
\newtheorem{example}{Example}
\newtheorem{lemma}{Lemma}
\newtheorem{remark}{Remark}
\numberwithin{equation}{section}
\definecolor{MyDarkBlue}{rgb}{0,0.08,0.45}
\definecolor{cites}{HTML}{324b13}
\definecolor{links}{HTML}{1a663b}
\definecolor{MyLightMagenta}{cmyk}{0.1,0.8,0,0.1}
\begin{document}
\title{Two-sided matching with firms' complementary preferences}
\author{Chao Huang\thanks{Institute for Social and Economic Research, Nanjing Audit University. Email: huangchao916@163.com.}}
\date{}
\maketitle

\begin{abstract}
This paper studies two-sided many-to-one matching in which firms have complementary preferences. We show that stable matchings exist under a balancedness condition that rules out a specific type of odd-length cycles formed by firms' acceptable sets. We also provide a class of preference profiles that satisfy this condition. Our results indicate that stable matching is compatible with a wide range of firms' complementary preferences.
\end{abstract}

\textit{Keywords}: two-sided matching; stability; complementarity; integer programming; many-to-one matching; balanced matrix

\textit{JEL classification}: C61, C78, D47, D63

\section{Introduction}
The problem of two-sided matching considers how to match two kinds of agents, such as men and women, workers and firms, students and schools. A key solution concept is stable matching, which excludes incentives for agents to block the outcome. In different settings of two-sided matching, \cite{KC82} and \cite{R84} have shown that stable matchings always exist when there are no complementarities. This paper studies discrete many-to-one matching. Stable matchings exist in this setting when firms have the following \textbf{substitutable preferences} defined by \cite{RS90}.
\begin{equation}\label{sub}
\begin{aligned}
&\qquad\text{\emph{Any worker chosen by a firm from a set of available workers}}\\
&\text{\emph{would still be chosen when the available set \textbf{shrinks}.}}
\end{aligned}
\end{equation}
Each firm has a substitutable preference means that workers are not complements. For example, suppose a firm hires Bob from some available workers but would not hire Bob if the available set shrinks as Ana becomes unavailable. We would consider that Ana is a complement to Bob for this firm. Substitutable preferences are critical for matching practices. Market designers have provided numerous solutions based on the Deferred Acceptance algorithm (\citealp{GS62}) to real-life markets with substitutable preferences.\footnote{Substitutable preferences subsume unit demands and responsive preferences. There is vast, ongoing literature on real-life applications for two-sided matching; see \cite{H18} for a survey.}

On the other hand, complementarities are very common in many real-life markets. For example, firms commonly let workers cooperate in manufacturing and services, and thus a worker is usually a complement to some other worker. Complementarities in firms' preferences have been a critical issue for market design because a stable matching may not exist when there are complementarities. This paper studies many-to-one matching where firms have \textbf{complementary preferences} of the following form.\footnote{The term ``complementary preference'' has also been used in literature to mean preferences that violate (\ref{sub}). If we follow this convention, complementary preferences defined by (\ref{com}) may be called ``purely complementary preferences''.}

\begin{equation}\label{com}
\begin{aligned}
&\qquad\text{\emph{Any worker chosen by a firm from a set of available workers}}\\
&\text{\emph{would still be chosen when the available set \textbf{expands}.}}
\end{aligned}
\end{equation}
Each firm has a complementary preference means that workers are not substitutes. For example, suppose a firm hires Bob from some available workers but would not hire Bob if the available set expands as Ana becomes available. We would consider that Ana, in the latter case, has substituted Bob.

Substitutable and complementary preferences are proper assumptions for job markets of different features. For example, firms often have complementary preferences in a job market where scale returns are significant. Although there is a substitution effect between two workers, workers are not substitutes when the scale effect consistently exceeds the substitution effect. On the other hand, firms have substitutable preferences in a job market where scale returns are so insignificant that workers are not complements.

We provide a condition on firms' complementary preferences to guarantee the existence of a stable matching. We show that a stable matching always exists when firms have complementary preferences and firms' acceptable sets form a \textbf{balanced} matrix (\citealp{B70}). A 0-1 matrix is balanced if it has no square submatrix of odd order with exactly two 1s in each row and column.  Consider the following market with three firms $f_1,f_2,f_3$, and three workers $w_1,w_2,w_3$.

\begin{equation}\label{exam_in1}
\begin{aligned}
&f_1: \{w_1,w_2\}\succ\emptyset \qquad\qquad\qquad\qquad &w_1: &\quad f_1\succ f_3\\
&f_2: \{w_2,w_3\}\succ\emptyset \qquad\qquad\qquad\qquad &w_2: &\quad f_2\succ f_1\\
&f_3: \{w_1,w_3\}\succ\emptyset \qquad\qquad\qquad\qquad &w_3: &\quad f_3\succ f_2\\
\end{aligned}
\end{equation}
Each firm has a complementary preference, and no stable matching exists in this market. For example, if $f_1$ hires $w_1$ and $w_2$, leaving $w_3$ unemployed, then $w_2$ and $w_3$ would rather work for $f_2$. Similarly, neither $\{w_2,w_3\}$ matching to $f_2$ nor $\{w_1,w_3\}$ matching to $f_3$ is a stable outcome. The failure of the balancedness condition in (\ref{exam_in1}) is due to the following matrix on the left.

\medskip
\begin{tabular}
[c]{c|cccc}
& $\{w_1,w_2\}$ & $\{w_2,w_3\}$ & $\{w_1,w_3\}$\\\hline
$w_1$ & $1$ & $0$ & $1$\\
$w_2$ & $1$ & $1$ & $0$\\
$w_3$ & $0$ & $1$ & $1$%
\end{tabular}
\begin{tikzpicture}[scale=0.8]
    \node (v1) at (0,0) {};
    \node (v2) at (2,3.46) {};
    \node (v3) at (4,0) {};
    \node (v4) at (2,2) {};
    \node (v5) at (2,-1) {};

    \begin{scope}[fill opacity=0.8]
    \filldraw[fill=yellow!70] ($(v2)+(0,0.8)$)
        to[out=0,in=60] ($(v4)$)
        to[out=240,in=0] ($(v1) + (0,-0.8)$)
        to[out=180,in=180] ($(v2) + (0,0.8)$);
    \filldraw[fill=green!80] ($(v2)+(0,1)$)
        to[out=180,in=120] ($(v4)$)
        to[out=300,in=180] ($(v3) + (0,-0.8)$)
        to[out=0,in=0] ($(v2) + (0,1)$);
    \filldraw[fill=blue!70] ($(v1)+(-0.6,0)$)
        to[out=90,in=180] ($(v4)+ (0,-0.9)$)
        to[out=0,in=90] ($(v3) + (0.6,0)$)
        to[out=270,in=0] ($(v5)$)
        to[out=180,in=270] ($(v1) + (-0.6,0)$);
    \end{scope}

    \foreach \v in {1,2,3} {
        \fill (v\v) circle (0.1);
    }

    \fill (v1) circle (0.1) node [right] {$w_1$};
    \fill (v2) circle (0.1) node [below left] {$w_2$};
    \fill (v3) circle (0.1) node [left] {$w_3$};

\end{tikzpicture}
\medskip

The balancedness condition can be described as a visualized constraint on firms' acceptable sets: The failure of balancedness in (\ref{exam_in1}) is attributed to the above odd-length cycle on the right. In particular, balancedness requires that every odd-length cycle formed by firms' acceptable sets has an acceptable set containing at least three workers of the cycle. An even-length cycle of any type does not cause the nonexistence of a stable matching. For example, readers can easily check that the market with four firms, four workers, and a similar structure as (\ref{exam_in1}) has a stable matching but does not admit a stable outcome when the numbers of firms and workers both increase to five.

Our main existence theorem allows two generalizations. First, by examining the structure of complementarities inside each firm, we can view each firm as a group of independent subsidiaries. We can then generalize the existence theorem by imposing balancedness only on firms' \textbf{primitive acceptable sets} that are relevant to the firm's independent subsidiaries; see Section \ref{Sec_gen}. The other generalization is due to the proof techniques for the existence theorem. We only need an additivity property of complementary preference to prove the existence theorem. Hence, stable matchings exist when balancedness is imposed on a more general class of firms' \textbf{additive preferences}; see Section \ref{Sec_add}.

Existing applications for matching theory mostly assume that firms (or schools, colleges, and hospitals) have substitutable preferences. Despite the great successes in market practice, substitutable preferences are also restrictive for ruling out all complements. In particular, whenever firm $f$ with a substitutable preference hires a worker $w$ from a set of available workers, $f$ should still hire $w$ if $f$'s available set shrinks; and thus, $f$ should hire $w$ when $w$ is the only available worker. This is in contrast to reality in which firms often hire at least a team of workers. In many real-life sectors, it is common for firms to hire several teams of workers, where members within a team are complements. Our existence theorem allows us to provide an application in which firms choose from teams of complementary workers. We follow \cite{H22} (henceforth, H22) to describe the structure of firms' acceptable sets as a ``technology tree''. Each technology of a technology tree demands a team of workers to implement. Firms choose to implement some technologies by recruiting the corresponding teams. We show that a technology tree of a specific structure induces balanced preference profiles. See Section \ref{Sec_app}.

\subsection{Related literature}
Complementarities have been regarded as the source of instabilities in matching theory. Positive results with complementarities of form (\ref{com}) are relatively rare. \cite{O08} showed that stable outcomes exist in a supply chain network where purchasing and selling contracts are complementary in a similar form as (\ref{com}). \cite{RY20} showed that stable matchings exist in a multilateral matching market with externalities where each agent has a complementary preference. Our market differs from this market because workers have unit demands in a two-sided matching market.

We study our problem using the method proposed in the author's recent work H22. H22 studied two-sided matching by investigating stable integral matchings of a fictitious market with ``divisible'' workers. This fictitious market is an instance of the continuum market of \cite{CKK19}, where a stable matching always exists according to their existence theorem. H22 investigated when a stable fractional matching can always be transformed into a stable integral one. We prove our results by applying H22's method to the market in which firms have complementary preferences.

Furthermore, our existence theorem generalizes H22's existence theorem when firms have complementary preferences. H22 showed that stable matchings exist if firms' demand type is totally unimodular.\footnote{See arXiv:2103.03418 for the latest version of H22. Earlier versions of H22 defined the firms' demand type differently and presented the condition as a unimodularity condition that is equivalent to the total unimodularity condition of the latest version; see footnote 21 of H22.} Total unimodularity of firms' demand type implies total unimodularity of firms' acceptable sets, which further implies balancedness of firms' acceptable sets. See Section \ref{Sec_uni}.

Our work also relates to studies on two-sided matching with complementarities, which include matching with couples (e.g., \citealp{KK05}, \citealp{KPR13}, \citealp{ABH14}, and \citealp{NV18}), matching with peer effects (e.g., \citealp{EY07} and \citealp{P12}), matching in large markets (e.g., \citealp{AH18}, \citealp{CKK19}, and \citealp{GK21}), and complementarities of other forms (e.g., \citealp{NV19}, and \citealp{H21a,H21b}).

The integer programming method of H22 and this paper also relates to the linear programming method for selecting stable matchings of specific properties (e.g., \citealp{V89}, \citealp{R92}, \citealp{RRV93}, \citealp{BB00}. \citealp{TS98} and \citealp{STQ06}), the integer programming methods for studying complexities of specific matching problems (\citealp{BMM14} and \citealp{ABM16}), and the fixed-point method that characterizes the existence of a stable matching (e.g., \citealp{A00}, \citealp{F03}, \citealp{EO04,EO06}, and \citealp{HM05}).

Complementarities in two-sided matching also relate to complementarities in markets with continuous monetary transfers and quasi-linear utilities (e.g., \citealp{DKM01}, \citealp{SY06}, \citealp{AWW13}, \citealp{HKNOW13}, \citealp{BK19}, and \citealp{TY19}). Firms and workers can bargain over continuous wages in the latter framework. H22 and this paper also allow generalizations in which firms and workers bargain over multidimensional contracts (\citealp{HM05}) that may specify wages, insurance, and retirement plans. \cite{E12} showed that firms and workers essentially bargain over one dimension when contracts are not complements and that \cite{HK10} exploit the framework with contracts in full generality. The latter paper's bilateral and unilateral substitutes conditions apply to cadet-branch matching (\citealp{S13} and \citealp{SS13}) and affirmative actions in school choice (\citealp{KS16}). H22 and this paper provide another opportunity to investigate the framework in which contracts may be complements.

The remainder of this paper is organized as follows. Section \ref{Sec_M} introduces the model. Section \ref{Sec_result} presents our existence results. Section \ref{Sec_app} provides an application of the balancedness condition. Proofs are relegated to the Appendix.

\section{Model\label{Sec_M}}

There is a set $F$ of firms and a set $W$ of workers. Let ${\o}$ be the null firm, representing not being matched with any firm. Each worker $w\in W$ has a strict, transitive, and complete preference $\succ_w$ over $\widetilde{F}:=F\cup\{{\o}\}$. For any $f, f'\in \widetilde{F}$, we write $f\succ_w f'$ when $w$ prefers $f$ to $f'$ according to $\succ_w$. We write $f\succeq_w f'$ if either $f\succ_w f'$ or $f=f'$. Let $\succ_W$ denote the preference profile of all workers. Each firm $f\in F$ has a strict, transitive and complete preference $\succ_f$ over $2^W$. For any $S,S'\subseteq W$, we write $S\succ_f S'$ when $f$ prefers $S$ to $S'$ according to $\succ_f$. We write $S\succeq_f S'$ if either $S\succ_f S'$ or $S=S'$. Let $\succ_F$ be the preference profile of all firms. A matching market can be summarized as a tuple $\Gamma=(W,F,\succ_W,\succ_F)$.

Let $Ch_f$ be the choice function of $f$ such that for any $S\subseteq W$, $Ch_f(S)\subseteq S$ and $Ch_f(S)\succeq_f S'$ for any $S'\subseteq S$. By convention, let $Ch_{{\o}}(S)=S$ for all $S\subseteq W$. For any $f\in F$, any $w\in W$, and any nonempty set $S\subseteq W$, we say that $f$ is acceptable to $w$ if $f\succ_w {\o}$; we say that $S$ is \textbf{acceptable} to $f$ if $S=Ch_f(S)$. The definition of an acceptable set for $f$ in this paper is slightly stronger than the usual one that only requires $S\succ_f\emptyset$. For example, if $f$ has preference $\{w\}\succ\{w,w'\}\succ\emptyset$, then $\{w,w'\}$ is not an acceptable set for $f$ under our definition. For each $f\in F$, let $\mathcal{A}_f$ be the collection of $f$'s acceptable sets. Let $\mathcal{A}=\cup_{f\in F}\mathcal{A}_f$ be the collection of all firms' acceptable sets.\footnote{We do not define acceptable sets for the null firm ${\o}$. By firms' acceptable sets we mean acceptable sets for firms from $F$.}

\begin{definition}
\normalfont
A \textbf{matching} $\mu$ is a function from the set $\widetilde{F}\cup W$ into $\widetilde{F}\cup 2^W$ such that for all $f\in \widetilde{F}$ and $w\in W$,
\begin{description}
\item[(\romannumeral1)] $\mu(w)\in \widetilde{F}$;

\item[(\romannumeral2)] $\mu(f)\in 2^W$;

\item[(\romannumeral3)] $\mu(w)=f$ if and only if $w\in\mu(f)$.
\end{description}
\end{definition}

We say that a matching $\mu$ is \textbf{individually rational} if $\mu(w)\succeq_w {\o}$ for all $w\in W$ and $\mu(f)=Ch_f(\mu(f))$ for all $f\in F$. We say that a firm $f$ and a subset of workers $S\subseteq W$ form a \textbf{blocking coalition} that blocks $\mu$ if $f\succeq_w\mu(w)$ for all $w\in S$, and $S\succ_f\mu(f)$. Individual rationalities for workers mean that each matched worker prefer her current employer to being unmatched.  Individual rationalities for firms mean that no firm wish to unilaterally drop any of its employees. When firm $f$ and a set $S$ block $\mu$, the firm may drop some of its original employees and hire some new workers. Thus, $S$ may contain workers that are matched with $f$ in $\mu$, and each worker $w\in S$ weakly prefers $f$ to $\mu(w)$. However, $f$ strictly prefers $S$ to $\mu(f)$ since $S\neq\mu(f)$.

\begin{definition}\label{stability}
\normalfont
A matching $\mu$ is \textbf{stable} if it is individually rational and there is no blocking coalition that blocks $\mu$.\footnote{The no blocking coalition condition implies individual rationalities for firms: $\mu(f)=Ch_f(\mu(f))$ for all $f\in F$. The stable matching defined by Definition \ref{stability} is called the core defined by weak domination in \cite{RS90}.}
\end{definition}

A stable matching always exists when firms have substitutable preferences in the sense of (\ref{sub}); see Chapter 6 of \cite{RS90}. This paper studies two-sided matching in which each firm has a complementary preference defined as follows.

\begin{definition}
\normalfont
Firm $f$ has a \textbf{complementary} preference if for any $S,S'$ with $S\subset S'\subseteq W$, $w\in Ch_f(S)$ implies $w\in Ch_f(S')$.
\end{definition}

\section{Result}\label{Sec_result}

We present our main results in this section. Section \ref{Sec_balance} provides a sufficient condition for the existence of a stable matching when firms have complementary preferences. Section \ref{Sec_uni} clarifies the relation between our condition and H22's total unimodularity condition. Section \ref{Sec_proof} provides a sketch of the proof. Section \ref{Sec_gen} and Section \ref{Sec_add} present two generalizations of our condition.

\subsection{Existence theorem}\label{Sec_balance}

A 0-1 matrix is \textbf{balanced} if it has no square submatrix of odd order with exactly two 1s in each row and column (\citealp{B70}, see also Chapter 21.5 of \citealp{S86}). That is, it has no submatrix of form (possibly after permutation of rows or columns):
\begin{equation}\label{BM}
\begin{pmatrix}
1      & 1      & 0      & 0      & \cdots & 0     \\
0      & 1      & 1      & 0      & \cdots & 0     \\
0      & 0      & 1      & 1      & \cdots & 0     \\
\vdots & \cdots & \cdots & \cdots & \cdots & \vdots\\
0      & \cdots & \cdots & \cdots & \cdots & 0     \\
0      & \cdots & \cdots & \cdots & 1      & 1     \\
1      & 0      & 0      & \cdots & 0      & 1
\end{pmatrix}
\end{equation}
of odd order.

We say that some sets of workers form matrix $M$ if the columns of $M$ are the indicator vectors of these sets. We show that balancedness of firms' acceptable sets implies the existence of a stable matching when firms have complementary preferences.

\begin{theorem}\label{thm_exist1}
\normalfont
There always exists a stable matching if firms have complementary preferences and firms' acceptable sets form a balanced matrix.
\end{theorem}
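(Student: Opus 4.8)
The plan is to follow the ``divisible workers'' route of H22: embed $\Gamma$ into a fictitious continuum economy, extract a stable \emph{fractional} matching there from the existence theorem of \cite{CKK19}, and then use balancedness to round it to a stable integral matching of $\Gamma$. First I would build the continuum economy as in H22 --- replace each worker $w$ by a unit mass of identical copies, keep the finite firm set, and extend each firm's preference over $2^{W}$ to a preference over fractional bundles in $[0,1]^{W}$; this is an instance of the continuum model of \cite{CKK19}, so by their existence theorem it admits a stable fractional matching $x^{\ast}$. The role of complementarity shows up in how cleanly $x^{\ast}$ can be described: for every firm $Ch_{f}(A)=\max_{\succ_{f}}\{S\in\mathcal{A}_{f}:S\subseteq A\}$, and complementarity makes $Ch_{f}$ monotone in $A$, so $f$'s demand is ``additive'' --- an optimal fractional bundle is a convex combination of indicator vectors of sets in $\mathcal{A}_{f}$ chosen to maximize a \emph{linear} objective. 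As noted in Section~\ref{Sec_add}, this additivity is the only feature of complementary preferences the rest of the argument uses.

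Next I would record $x^{\ast}$ in coordinates $(y_{f,S})_{f\in F,\,S\in\mathcal{A}_{f}}$, where $y_{f,S}$ is the mass of copies that firm $f$ holds ``in the role of the acceptable set $S$,'' and read off from stability of $x^{\ast}$: worker feasibility $\sum_{f}\sum_{S\ni w}y_{f,S}\le 1$, firm feasibility $\sum_{S\in\mathcal{A}_{f}}y_{f,S}\le 1$, and the no-blocking inequalities, which for each worker only compare the firms she is (partially) assigned to and for each firm only compare the $\succ_{f}$-values of its acceptable sets. Freezing the ``active'' data of $x^{\ast}$ --- the support $\mathcal{S}=\{(f,S):y^{\ast}_{f,S}>0\}$, the set of fully-assigned workers, and for each worker her $\succ_{w}$-worst assigned firm --- I would pass to the polyhedron $Q$ of nonnegative $y$ supported on $\mathcal{S}$ that keep fully-assigned workers fully assigned (set-partition constraints), keep every other worker's total mass at most one (set-packing constraints), and keep each firm's total mass at most one. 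The columns of the defining matrix of $Q$ are restrictions of indicator vectors of firms' acceptable sets together with one unit column per firm; since firms' acceptable sets form a balanced matrix, and balancedness is inherited by submatrices and preserved when one adjoins unit columns, the integrality theorem for balanced matrices (\citealp{B70}; see Chapter~21.5 of \citealp{S86}) tells us that every vertex of $Q$ is integral. Hence $Q$ has an integral vertex $y$: it assigns each firm at most one acceptable set, these sets are pairwise disjoint, and it defines a genuine integral matching $\mu$ of $\Gamma$.

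Finally I would check that $\mu$ is stable. Because $y$ lies on the same active face as $x^{\ast}$, $\mu$ fully matches every previously fully-matched worker and matches every remaining worker only to firms she weakly prefers to her $x^{\ast}$-assignment; this gives individual rationality on the worker side and makes the worker part of the no-blocking condition identical to that of $x^{\ast}$. Because we stayed inside firm $f$'s optimal face, the $\succ_{f}$-value of $\mu(f)$ equals that of $f$'s bundle under $x^{\ast}$, so no acceptable set available to $f$ is $\succ_{f}$-better than $\mu(f)$; thus $\mu$ admits no blocking coalition and is stable. The hard part is not the steps relying on \cite{CKK19} or on the classical integrality of balanced set systems, but the bookkeeping in between: one must cut out $Q$ finely enough that its constraint matrix stays balanced while \emph{every} integral point of $Q$ is simultaneously individually rational, offers each worker only weakly-preferred firms, and leaves each firm at a $\succ_{f}$-maximum among its available acceptable sets --- so that integrality of $Q$ really delivers \emph{stability} and not merely feasibility. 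It is precisely the additive structure forced by complementary preferences that aligns each firm's ranking of its acceptable sets with the linear objective governing $Q$, and reconciling these is where the real work lies.
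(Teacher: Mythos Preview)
Your proposal has a genuine gap in the balancedness claim for the constraint matrix of $Q$. Besides the worker rows (indeed covered by balancedness of $\mathcal{A}$), $Q$ carries one \emph{firm row} per $f$, encoding $\sum_{S} y_{f,S}\le 1$; that row has a $1$ in every column $(f,S)$ with $(f,S)\in\mathcal{S}$. When a firm has several acceptable sets in the support, its firm row combines with worker rows to form odd cycles. Concretely, take $W=\{w_1,w_2\}$, let $f_1$ have the complementary preference $\{w_1,w_2\}\succ_{f_1}\{w_1\}\succ_{f_1}\{w_2\}\succ_{f_1}\emptyset$, and let $f_2$ accept only $\{w_1,w_2\}$. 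The acceptable sets $\{w_1\},\{w_2\},\{w_1,w_2\}$ form a $2\times 3$ matrix, trivially balanced. But the $3\times 3$ submatrix of your constraint matrix with rows $w_1,w_2,f_1$ and columns $(f_1,\{w_1\}),(f_1,\{w_2\}),(f_2,\{w_1,w_2\})$ is exactly the forbidden odd-cycle matrix, so the defining matrix of $Q$ is not balanced. Adjoining a unit slack column per firm does nothing here: unit \emph{columns} are harmless, but the offending firm \emph{rows} remain.

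The paper eliminates precisely this obstacle by a preliminary \emph{firm decomposition} (Lemma~\ref{lma_decomp1}): each $f$ is split into $|\mathcal{A}_f|$ copies, one per acceptable set, and the copies are inserted into every worker's list in the $\succ_f$-order of their sets. In the decomposed market $\overline{\Gamma}$ every firm has a single acceptable set, so after dropping the $\emptyset$-columns each firm row contains only one $1$ and can be stripped; the matrix to which \cite{FHO74} is applied is then literally a submatrix of the acceptable-set matrix. Complementarity (really additivity) enters only in Lemma~\ref{lma_decomp1}, to show that a stable matching of $\overline{\Gamma}$ collapses to one of $\Gamma$ --- not, as you suggest, to linearize firms' continuum objectives. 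Your final stability check is likewise unsubstantiated: nothing in your description of $Q$ prevents the integral vertex from assigning $f$ a $\succ_f$-inferior set from the support, so ``the $\succ_f$-value of $\mu(f)$ equals that of $f$'s bundle under $x^{\ast}$'' need not hold; the paper handles this through H22's stable transformations, which are calibrated so that each rounding step preserves \emph{stability}, not merely feasibility.
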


Balancedness can be tested in polynomial time; see \cite{CCR99}. Our balancedness condition can be represented as a visualized constraint on firms' acceptable sets by applying Berge's original definition of balancedness on hypergraphs (\citealp{B70}, see also \citealp{B89}). A hypergraph is a generalization of a graph in which an edge can join any number of vertices. Let $\mathcal{A}^*=\{A|A\in \mathcal{A}\text{ and }|A|\geq2\}$ be the collection of firms' non-singleton acceptable sets. The \textbf{acceptable-set hypergraph} is a hypergraph $(W,\mathcal{A}^*)$ in which the vertices are the workers from $W$ and the edges are firms' non-singleton acceptable sets. A cycle of length $k\geq 2$ is a cyclic alternating sequence of distinct workers (vertices) and acceptable sets (edges): $(w^1, A^1, w^2, A^2, ..., w^k, A^k, w^1)$, where $w^i,w^{i+1}\in A^i (i\in\{1,2,\ldots,k-1\})$ and $w^k,w^1\in A^k$.\footnote{The sequence $(w,A,w)(w\in A, A\in \mathcal{A})$ is not considered to be a cycle.}

\begin{fact}
\normalfont
Firms' acceptable sets form a balanced matrix if every odd-length cycle in the acceptable-set hypergraph has an acceptable set containing at least three workers of the cycle.
\end{fact}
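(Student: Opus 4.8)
The plan is to prove the contrapositive and to recognise the statement as the ``easy'' direction of Berge's equivalence between balanced $0$-$1$ matrices and balanced hypergraphs (\citealp{B70}). Write $M$ for the matrix formed by $\mathcal{A}$. I would first dispose of the singleton acceptable sets: a column of $M$ with at most one $1$ restricts, inside any submatrix, to a column with at most one $1$, so it can never sit in a square submatrix having exactly two $1$s in each column; hence $M$ is balanced if and only if the submatrix $M^{*}$ consisting of the columns corresponding to $\mathcal{A}^{*}$ is balanced. So it suffices to show that the hypergraph hypothesis forces $M^{*}$ to be balanced.

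Suppose $M^{*}$ were not balanced. Then, after permuting rows and columns, $M^{*}$ contains a submatrix $N$ equal to the circulant pattern (\ref{BM}) of some odd order $n$, and $n\geq 3$ because each row of $N$ carries two $1$s. Let $w^{1},\dots,w^{n}\in W$ index the (distinct) rows of $N$ and $A^{1},\dots,A^{n}\in\mathcal{A}^{*}$ its columns. For $n\geq 3$ the columns of (\ref{BM}) are pairwise distinct $0$-$1$ vectors, so $A^{1},\dots,A^{n}$ are pairwise distinct sets, each of size at least two; and reading off the positions of the $1$s column by column shows (after relabelling the columns) that $w^{i},w^{i+1}\in A^{i}$ for $i=1,\dots,n-1$ and $w^{n},w^{1}\in A^{n}$. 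Therefore $(w^{1},A^{1},w^{2},A^{2},\dots,w^{n},A^{n},w^{1})$ is an admissible cycle of odd length $n$ in the acceptable-set hypergraph $(W,\mathcal{A}^{*})$, whose set of workers (vertices) is exactly $\{w^{1},\dots,w^{n}\}$.

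Now I would invoke the hypothesis: some edge $A^{i}$ of this cycle contains at least three of the workers $w^{1},\dots,w^{n}$. Since $\{w^{1},\dots,w^{n}\}$ is precisely the row set of $N$, the column of $N$ indexed by $A^{i}$ then has at least three $1$s, contradicting the fact that every column of the pattern (\ref{BM}) has exactly two. This contradiction establishes that $M^{*}$, and hence $M$, is balanced.

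I do not anticipate a genuine difficulty: the whole argument is a translation between the matrix language of (\ref{BM}) and the hypergraph language of cycles, and the only points needing care are (i) removing the singleton columns at the start, (ii) checking that the columns selected by the pattern (\ref{BM}) really are pairwise-distinct sets of size at least two, so that the sequence read off the matrix is an admissible cycle of $(W,\mathcal{A}^{*})$ and not the degenerate $(w,A,w)$, and (iii) keeping straight that ``contains at least three workers of the cycle'' refers to the full edge $A^{i}\subseteq W$, whereas the two $1$s guaranteed in that column by (\ref{BM}) are only its two cycle-neighbours --- it is exactly a third cycle-vertex lying inside $A^{i}$ that forces the forbidden third $1$. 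If instead one works from the more general phrasing of the definition (``no odd square submatrix with exactly two $1$s in each row and column''), the single extra step is the standard remark that the bipartite graph on the rows and columns of such a submatrix is $2$-regular, hence a vertex-disjoint union of even cycles, one of which has odd length since the submatrix has odd order; that odd cycle then plays the role of $N$.
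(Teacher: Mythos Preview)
Your argument is correct and is exactly the standard translation between the matrix form (\ref{BM}) and Berge's hypergraph definition of balancedness. The paper does not supply its own proof of this Fact: it is stated as a direct consequence of Berge's original definition (\citealp{B70}; see also \citealp{B89}), with no argument given beyond the citation. So there is nothing to compare against, and your write-up would serve as a perfectly adequate justification. The care you take over points (i)--(iii), and in particular the final remark about decomposing a $2$-regular bipartite incidence graph into cycles when the forbidden submatrix is only assumed to have two $1$s per row and column (not necessarily in the exact circulant arrangement of (\ref{BM})), are all appropriate and correct.
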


\begin{example}
\normalfont
The following is a profile of firms' complementary preferences and its acceptable-set hypergraph.
\begin{equation}
\begin{aligned}
&f_1: \{w_1,w_2,w_3\}\succ\emptyset\\
&f_2: \{w_1,w_2\}\succ\{w_1\}\succ\emptyset\\
&f_3: \{w_2,w_3\}\succ\{w_2\}\succ\emptyset\\
&f_4: \{w_3,w_4\}\succ\{w_3\}\succ\emptyset\\
&f_5: \{w_4,w_5,w_6\}\succ\emptyset
\end{aligned}
\end{equation}
\begin{center}
\begin{tikzpicture}[scale=0.9]
    \node (v1) at (0,4) {};
    \node (v2) at (0,2) {};
    \node (v3) at (0,0) {};
    \node (v4) at (4,0) {};
    \node (v5) at (5,2) {};
    \node (v6) at (6,4) {};

    \begin{scope}[fill opacity=0.8]
    \filldraw[fill=yellow!70] ($(v1)+(0,0.8)$)
        to[out=0,in=0] ($(v3) + (0,-0.8)$)
        to[out=180,in=180] ($(v1) + (0,0.8)$);
    \filldraw[fill=orange!70] ($(v6)+(0,0.8)$)
        to[out=0,in=0] ($(v4) + (0,-0.8)$)
        to[out=180,in=180] ($(v6) + (0,0.8)$);
    \filldraw[fill=red!80] ($(v1)+(0,0.5)$)
        to[out=0,in=0] ($(v2) + (0,-0.5)$)
        to[out=180,in=180] ($(v1) + (0,0.5)$);
    \filldraw[fill=green!80] ($(v3)+(-0.6,0)$)
        to[out=90,in=90] ($(v4) + (0.6,0)$)
        to[out=270,in=270] ($(v3) + (-0.6,0)$);
    \filldraw[fill=blue!80] ($(v2)+(0,0.5)$)
        to[out=0,in=0] ($(v3) + (0,-0.5)$)
        to[out=180,in=180] ($(v2) + (0,0.5)$);
    \end{scope}

    \foreach \v in {1,2,3,4,5,6} {
        \fill (v\v) circle (0.1);
    }

    \fill (v1) circle (0.1) node [right] {$w_1$};
    \fill (v2) circle (0.1) node [below left] {$w_2$};
    \fill (v3) circle (0.1) node [below] {$w_3$};
    \fill (v4) circle (0.1) node [below] {$w_4$};
    \fill (v5) circle (0.1) node [below] {$w_5$};
    \fill (v6) circle (0.1) node [below] {$w_6$};
\end{tikzpicture}
\end{center}
There is only one odd-length cycle $(w_1,\{w_1,w_2\},w_2,\{w_2,w_3\},w_3,\{w_1,w_2,w_3\},w_1)$ in which $\{w_1,w_2,w_3\}$ contains all three workers of the cycle. Therefore, the balancedness condition holds. We then know that this market admits a stable matching for all possible preferences of workers.
\end{example}

\subsection{Relation to total unimodularity of firms' demand type}\label{Sec_uni}

H22 showed that stable matchings exist if firms' demand type is totally unimodular. A matrix is \textbf{totally unimodular} if every square submatrix has determinant 0 or $\pm1$. A totally unimodular matrix is balanced since matrix (\ref{BM}) has determinant 2. A balanced matrix may not be totally unimodular: A balanced matrix is either totally unimodular or has a particular bipartite representation; see \cite{CCR99}. In this section, we discuss the relation between our balancedness condition and H22's total unimodularity condition.

\cite{BK19} proposed the notion of demand type in an exchange economy with continuous monetary transfers and quasi-linear utilities. H22 adopted this notion in a discrete matching market to describe how a firm's choice changes as its available set expands. See Definition 4 of H22. When a firm's available set expands from $\emptyset$ to an acceptable set of the firm, the firm's choice also changes from $\emptyset$ to this acceptable set. Thus, all acceptable sets of a firm belong to the firm's demand type. However, a firm's demand type often contains elements that are not the firm's acceptable sets. For example, suppose $f$ has preference $\{w_1,w_2\}\succ\{w_1\}\succ\emptyset$. $f$'s choice changes from $\{w_1\}$ to $\{w_1,w_2\}$ when its available set expands from $\{w_1\}$ to $\{w_1,w_2\}$. Hence, $(1,1)-(1,0)=(0,1)$ is an element of $f$'s demand type, but $\{w_2\}$ is not acceptable for $f$.

When firms have complementary preferences, our balancedness condition generalizes H22's total unimodularity condition in the following two aspects. (i) Since all acceptable sets of a firm belong to the firm's demand type, total unimodularity of firms' demand type implies total unimodularity (and balancedness) of firms' acceptable sets.\footnote{Any submatrix of a totally unimodular matrix is itself totally unimodular.} However, the converse is not true since a firm's demand type often contains elements that are not the firm's acceptable sets. (ii) Total unimodularity of firms' acceptable sets implies balancedness of firms' acceptable sets. The converse is not true since a balanced matrix may not be totally unimodular.

The following are two illustrative examples. Example \ref{exam_(i)} provides an instance of firms' complementary preferences in which total unimodularity of the firms' demand type fails, but total unimodularity (and balancedness) of the firms' acceptable sets holds. Example \ref{exam_(ii)} provides an instance of firms' complementary preferences in which total unimodularity of the firms' acceptable sets fails, but balancedness of the firms' acceptable sets holds.

\begin{example}\label{exam_(i)}
\normalfont
Consider the following firms' complementary preferences.
\begin{equation}\label{H22}
\begin{aligned}
&f_1: \{w_1,w_2,w_3\}\succ\{w_3\}\succ\emptyset\\
&f_2: \{w_1,w_3\}\succ\emptyset\\
&f_3: \{w_2,w_3\}\succ\emptyset
\end{aligned}
\end{equation}
$f_1$'s choice changes from $\{w_3\}$ to $\{w_1,w_2,w_3\}$ when its available set expands from $\{w_3\}$ to $\{w_1,w_2,w_3\}$. Hence, $(1,1,1)-(0,0,1)=(1,1,0)$ is an element of $f_1$'s demand type, whereas $\{w_1,w_2\}$ is not an acceptable set for $f_1$. Readers can check that firms' acceptable sets form a totally unimodular (and thus balanced) matrix. Since $f_2$'s demand type contains $(0,1,1)$ and $f_3$'s demand type contains $(1,0,1)$, the firms' demand type is not totally unimodular according to the matrix formed by $(1,1,0)$, $(0,1,1)$, and $(1,0,1)$.
\end{example}

\begin{example}\label{exam_(ii)}
\normalfont
The following is a profile of firms' complementary preferences and its acceptable-set hypergraph.
\begin{equation}
\begin{aligned}
&f_1: \{w_1,w_2,w_3,w_4\}\succ\{w_1,w_2\}\succ\emptyset\\
&f_2: \{w_1,w_3\}\succ\{w_3\}\succ\emptyset\\
&f_3: \{w_1,w_4\}\succ\{w_4\}\succ\emptyset
\end{aligned}
\end{equation}
\begin{center}
\begin{tikzpicture}[scale=0.9]
    \node (v1) at (0,3) {};
    \node (v2) at (3,3) {};
    \node (v3) at (0,0) {};
    \node (v4) at (3,0) {};

    \begin{scope}[fill opacity=0.8]
    \filldraw[fill=yellow!70] ($(v1)+(0,1.5)$)
        to[out=0,in=90] ($(v2) + (1.3,0)$)
        to[out=270,in=0] ($(v4) + (0,-1.3)$)
        to[out=180,in=270] ($(v3) + (-1.5,0)$)
        to[out=90,in=180] ($(v1) + (0,1.5)$);
    \filldraw[fill=red!70] ($(v1)+(-0.6,0)$)
        to[out=90,in=180] ($(v2)+(0,0.7)$)
        to[out=0,in=90] ($(v2)+(0.8,0)$)
        to[out=270,in=0] ($(v2)+(0,-0.7)$)
        to[out=180,in=270] ($(v1)+(-0.6,0)$);
    \filldraw[fill=blue!70] ($(v1)+(0,0.7)$)
        to[out=0,in=90] ($(v3)+(0.4,0)$)
        to[out=270,in=0] ($(v3)+(0,-0.5)$)
        to[out=180,in=180] ($(v1)+(0,0.7)$);
    \filldraw[fill=green!80] ($(v1)+(0,0.7)$)
        to[out=340,in=90] ($(v4)+(0.4,0)$)
        to[out=270,in=0] ($(v4)+(0,-0.7)$)
        to[out=180,in=160] ($(v1)+(0,0.7)$);
    \end{scope}

    \foreach \v in {1,2,3,4} {
        \fill (v\v) circle (0.1);
    }

    \fill (v1) circle (0.1) node [right] {$w_1$};
    \fill (v2) circle (0.1) node [below left] {$w_2$};
    \fill (v3) circle (0.1) node [below] {$w_3$};
    \fill (v4) circle (0.1) node [below] {$w_4$};
\end{tikzpicture}
\end{center}
There are several odd-length cycles, such as the cycle
\begin{equation*}
(w_3,\{w_1,w_3\},w_1,\{w_1,w_4\},w_4,\{w_1,w_2,w_3,w_4\},w_3).
\end{equation*}
However, every odd-length cycle must contain $\{w_1,w_2,w_3,w_4\}$, which has all workers of the cycle. This preference profile satisfies the balancedness condition, but firms' acceptable sets does not form a totally unimodular matrix: The submatrix formed by the four acceptable sets in colour has determinant $2$.
\end{example}

\subsection{Sketch of the proof}\label{Sec_proof}

H22 studied two-sided matching by constructing a fictitious market $\widehat{\Gamma}$ where each worker is ``divisible''. Each stable integral matching in $\widehat{\Gamma}$ corresponds to a stable matching in the original market $\Gamma$. The fictitious market $\widehat{\Gamma}$ is an instance of the continuum market of \cite{CKK19}, and a stable matching (possibly fractional) always exists in $\widehat{\Gamma}$ according to their existence theorem. H22 investigated when we can always transform a stable fractional matching into a stable integral one.

We prove our existence theorem by applying H22's method. There are two critical adjustments of H22's method in our proof.

(i) Given a market $\Gamma$ with firms' complementary preferences, we construct a new market $\overline{\Gamma}$ where each firm that has more than one acceptable sets is decomposed into several firms that each has only one acceptable set. We show that a stable matching in $\overline{\Gamma}$ corresponds to a stable matching in $\Gamma$.

(ii) H22 applied a result of \cite{HK56} on unimodular matrices (see also Theorem 21.5 of \citealp{S86}) to the continuum market induced from the original market $\Gamma$. We instead apply a result of \cite{FHO74} on balanced matrices (see also Theorem 21.7 of \citealp{S86}) to the continuum market induced from $\overline{\Gamma}$.

We illustrate (i) by the following example and provide a formal discussion in Section \ref{Sec_decomp1} of the Appendix. We explain (ii) in Section \ref{proof_thm1} of the Appendix.

\begin{example}\label{exam_decomp}
\normalfont
Consider a market $\Gamma$ with the following preference profile.
\begin{equation}\label{original}
\begin{aligned}
&f_1: \{w_1,w_2,w_3\}\succ\{w_1\}\succ\{w_2,w_3\}\succ\emptyset \quad &w_1: &\quad f_1\succ{\o}\\
&f_2: \{w_2,w_3,w_4\}\succ\emptyset \quad &w_2: &\quad f_1\succ f_2\succ{\o}\\
&\quad &w_3: &\quad f_2\succ f_1\succ{\o}\\
&\quad &w_4: &\quad f_2\succ{\o}
\end{aligned}
\end{equation}
We decompose $f_1$ into three firms $f^1_1$, $f^2_1$, and $f^3_1$: $f^1_1$ has preference $\{w_1,w_2,w_3\}\succ\emptyset$; $f^2_1$ has preference $\{w_1\}\succ\emptyset$; and $f^3_1$ has preference $\{w_2,w_3\}\succ\emptyset$. We also decompose $f_1$ in each worker's preference list at $f_1$'s original position in the list. At each position, the order of the three firms is the same as the order of their acceptable sets in $f_1$'s preference list. The agents have the following preferences in the new market $\overline{\Gamma}$.
\begin{equation}\label{newmarket}
\begin{aligned}
&f^1_1: \{w_1,w_2,w_3\}\succ\emptyset \quad &w_1:&\quad f^1_1\succ f^2_1\succ f^3_1\succ{\o}\\
&f^2_1: \{w_1\}\succ\emptyset \quad &w_2:&\quad f^1_1\succ f^2_1\succ f^3_1\succ f_2\succ {\o}\\
&f^3_1: \{w_2,w_3\}\succ\emptyset \quad &w_3:&\quad f_2\succ f^1_1\succ f^2_1\succ f^3_1\succ{\o}\\
&f_2: \{w_2,w_3,w_4\}\succ\emptyset \quad &w_4: &\quad f_2\succ{\o}
\end{aligned}
\end{equation}
A critical observation is that, in any stable matching of
$\overline{\Gamma}$, only one firm among those decomposed from the same firm can be matched with workers. For instance, only one firm from $\{f^1_1,f^2_1,f^3_1\}$ can be matched with workers in a stable matching of (\ref{newmarket}). Otherwise, suppose $f^2_1$ and $f^3_1$ are both matched with their acceptable sets. According to $f_1$'s complementary preference in (\ref{original}), there is a firm in (\ref{newmarket}) (i.e., $f^1_1$) whose unique acceptable set is the union of the acceptable sets of $f^2_1$ and $f^3_1$.\footnote{This property holds when firms' preferences satisfy a milder additivity condition; see Definition \ref{def_additive}.\label{foot_add}} Because $f^1_1$ must be matched with $\emptyset$ in this case and all workers prefer $f^1_1$ to $f^2_1$ and $f^3_1$, according to our decomposition, $f^1_1$ would form a blocking coalition with workers that are matched with $f^2_1$ and $f^3_1$.

We then find that every stable matching $\overline{\mu}$ in $\overline{\Gamma}$ corresponds to a stable matching $\mu$ in $\Gamma$ as follows: If firm $\overline{f}$ in $\overline{\Gamma}$ is matched with a nonempty set $S$ in $\overline{\mu}$, and $\overline{f}$ is decomposed from $f$ in $\Gamma$, we match $f$ with $S$ in $\mu$. For example, the following is a stable matching in (\ref{newmarket}).
\begin{equation*}\label{matching1}
\left(
             \begin{aligned}
             f^1_1  \qquad & f^2_1  \qquad & f^3_1 \qquad & f_2 \qquad & {\o} \\
             w_1,w_2,w_3 \quad & & & & w_4
             \end{aligned}
\right)
\end{equation*}
This matching corresponds to the following stable matching in the original market (\ref{original}).
\begin{equation*}\label{matching1o}
\left(
             \begin{aligned}
             f_1 \qquad \qquad & f_2 &\qquad \quad {\o} \\
             w_1,w_2,w_3 \qquad  & & w_4
             \end{aligned}
\right)
\end{equation*}
\end{example}
This observation and H22's existence theorem immediately implies that total unimodularity of firms' acceptable sets guarantees the existence of a stable matching when firms have complementary preferences. Total unimodularity of firms' acceptable sets can be further generalized into balancedness of firms' acceptable sets by step (ii), which is explained in Section \ref{proof_thm1} of the Appendix.

\subsection{Primitive acceptable sets}\label{Sec_gen}
Consider preference profile (\ref{exam_in1}).
If we replace $f_1$'s preference with $\{w_1,w_2\}\succ\{w_2\}\succ\emptyset$, we will find that a stable matching does not exist either. However, it will no longer be the case if $f_1$ has the following preference.

\begin{equation}\label{pri}
f_1: \{w_1,w_2\}\succ\{w_1\}\succ\{w_2\}\succ\emptyset
\end{equation}

Readers can check that the new firms' preference profile admits a stable matching for all possible preferences of workers. We can explain this distinction by examining the structure of complementarities in $f_1$'s preference. When $f_1$ has preference (\ref{pri}), $w_1$ and $w_2$ are no longer complements for $f_1$, and thus we can view $f_1$ as two separate firms: one only accepts $w_1$ and the other only accepts $w_2$. Balancedness recovers after we decompose $f_1$ into two firms. This example motivates us to generalize Theorem \ref{thm_exist1} by examining the structure of complementarities inside each firm.

In a matching market $\Gamma$, we describe the structure of complementarities for each firm $f\in F$ by a \textbf{complementarity graph} $G_f=(W_f, E)$ with vertex set $W_f=\{w\in W|w\in A \text{ for some } A\in \mathcal{A}_f\}$. $W_f$ is the set of $f$'s potential employees: Each worker from $W_f$ belongs to some acceptable set for $f$. There is an undirected edge between $w,w'\in W_f$ if they are complements for $f$: For any $w,w'\in W_f$, $(w,w')\in E$ if (i) there exists $S\subseteq W$ such that $w\notin Ch_f(S)$ and $w\in Ch_f(S\cup\{w'\})$; or (ii) there exists $S'\subseteq W$ such that $w'\notin Ch_f(S')$ and $w'\in Ch_f(S'\cup\{w\})$.
\begin{definition}
\normalfont
A nonempty set of workers $S\subseteq W$ is a \textbf{primitive acceptable set} for firm $f$ if $Ch_f(S)=S$ and workers from $S$ are connected in $G_f$.
\end{definition}

A maximal connected subgraph of $G_f$ is called a component of $G_f$. If firm $f$'s complementarity graph $G_f$ has $N$ components, we can view $f$ as a group of $N$ independent subsidiaries. Thus, the notion of primitive acceptable set has the following meaning: $f$'s acceptable set $S$ is a primitive acceptable set for $f$ if workers from $S$ work for only one of $f$'s independent subsidiaries. For example, consider the following firms' complementary preferences.
\begin{equation*}
\begin{aligned}
&f_1: \{w_1,w_2,w_3\}\succ\{w_1,w_2\}\succ\{w_3\}\succ\emptyset\\
&f_2: \{w_1,w_2,w_3\}\succ\{w_1,w_2\}\succ\{w_1\}\succ\{w_2\}\succ\emptyset
\end{aligned}
\end{equation*}
Complementarity graphs $G_{f_1}$ and $G_{f_2}$ are as follows.

\begin{center}
\tikzstyle{place}=[circle,fill=blue!20]
\begin{tikzpicture}

  \node (n1) at (1,3) [place] {$w_1$};
  \node (n2) at (4,2) [place]{$w_2$};
  \node (n3) at (2,0.5)  [place]{$w_3$};
  \node (n4) at (9,3) [place]{$w_1$};
  \node (n5) at (12,2) [place] {$w_2$};
  \node (n6) at (10,0.5) [place]{$w_3$};
  \node (g1) at (2,-1) {$G_{f_1}$};
  \node (g2) at (10,-1) {$G_{f_2}$};

  \foreach \from/\to in {n1/n2,n4/n6,n5/n6}
    \draw (\from) -- (\to) [thick];

\end{tikzpicture}
\end{center}
$G_{f_1}$ has two components, and thus $f_1$ can be viewed as a group of two independent subsidiaries. Hence, $\{w_1,w_2\}$ and $\{w_3\}$ are $f_1$'s primitive acceptable sets. $G_{f_2}$ has only one component, and thus all acceptable sets for $f_2$ are primitive acceptable sets for $f_2$.\footnote{Note that $\{w_1,w_2\}$ is a primitive acceptable set for $f_2$: Although $w_1$ and $w_2$ are not complements for $f_2$, they are both complements to $w_3$ and thus connected in $G_{f_2}$.}

We have the following theorem that generalizes Theorem \ref{thm_exist1}.

\begin{theorem}\label{thm_exist3}
\normalfont
There always exists a stable matching if firms have complementary preferences and firms' primitive acceptable sets form a balanced matrix.
\end{theorem}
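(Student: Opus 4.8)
My plan is to reduce Theorem \ref{thm_exist3} to Theorem \ref{thm_exist1} by a decomposition in the spirit of Example \ref{exam_decomp}, splitting each firm into one single-acceptable-set firm per \emph{primitive} acceptable set. The reduction rests on a structural description of complementary preferences. If $f$ has a complementary preference then $Ch_f$ is monotone ($S\subseteq S'\Rightarrow Ch_f(S)\subseteq Ch_f(S')$), so $\mathcal A_f$ is closed under union; and, writing $C_1^f,\dots,C_{N_f}^f$ for the components of $G_f$, the choice function is \emph{separable across components}: $Ch_f(S)=\bigcup_i Ch_f(S\cap C_i^f)$ for every $S\subseteq W$. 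This is the one genuine computation. Inclusion ``$\supseteq$'' is monotonicity; for ``$\subseteq$'', take $w\in C_i^f$ with $w\in Ch_f(S)$ and enlarge $S\cap C_i^f$ to $S$ one worker at a time through the workers of $S\setminus C_i^f$; at the first stage at which $w$ enters the chosen set, the worker $v$ just added lies outside $C_i^f$ and, together with $w$, satisfies clause (i) in the definition of the edge set $E$, so $(w,v)\in E$ — contradicting that $w$ and $v$ lie in different components of $G_f$. Separability shows that the primitive acceptable sets of $f$ are exactly the acceptable sets contained in one component of $G_f$, and that each acceptable set $A$ of $f$ is the disjoint union of the nonempty sets among $A\cap C_1^f,\dots,A\cap C_{N_f}^f$, each a primitive acceptable set.

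Next I would build the auxiliary market $\overline\Gamma$: for each firm $f$ and each primitive acceptable set $P$ of $f$, create a firm $g_P$ whose only acceptable set is $P$ — such a firm has a complementary preference — and replace $f$, in each worker's list, by the block of firms $g_P$ placed in $f$'s original slot and ordered so that $g_P$ precedes $g_{P'}$ whenever $P\succ_f P'$. The firms' acceptable sets in $\overline\Gamma$ are precisely the firms' primitive acceptable sets in $\Gamma$, which form a balanced matrix by hypothesis, so Theorem \ref{thm_exist1} supplies a stable matching $\overline\mu$ of $\overline\Gamma$. Define $\mu$ on $\Gamma$ by $\mu(f):=\bigcup_P\overline\mu(g_P)$, the union over primitive acceptable sets $P$ of $f$. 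Since each worker is matched to at most one firm under $\overline\mu$, $\mu$ is a well-defined matching; it is individually rational for workers because every $g_P$ sits in $f$'s slot, and for firms because $\mu(f)$ is a union of acceptable sets of $f$ and $\mathcal A_f$ is union-closed.

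The heart of the argument is that $\mu$ has no blocking coalition, and here the naive reduction — pushing a blocking pair $(f,S)$ down to some $(g_{S\cap C_i^f},\,S\cap C_i^f)$ — does not go through, because $\succ_f$ restricted to acceptable sets need not be separable across components, so a worker employed by $f$ in component $i$ can prefer her current $\overline\mu$-match to $g_{S\cap C_i^f}$. The fix is to enlarge the coalition: given a blocking pair $(f,S)$ of $\mu$, put $S^*:=Ch_f(S\cup\mu(f))$; then $(f,S^*)$ still blocks $\mu$, $S^*$ is acceptable, $S^*\supseteq\mu(f)$ by monotonicity, and $S^*\neq\mu(f)$ since $S^*\succeq_f S\succ_f\mu(f)$. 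By separability there is a component $i^*$ with $P^*:=S^*\cap C_{i^*}^f\supsetneq\mu(f)\cap C_{i^*}^f$, and $P^*$ is a primitive acceptable set of $f$. Then $(g_{P^*},P^*)$ blocks $\overline\mu$: the firm $g_{P^*}$ is unmatched under $\overline\mu$ (otherwise $P^*\subseteq\mu(f)$, contradicting $P^*\supsetneq\mu(f)\cap C_{i^*}^f$), and each $w\in P^*$ weakly prefers $g_{P^*}$ to $\overline\mu(w)$ — if $w\notin\mu(f)$ this follows from $w\in S^*\setminus\mu(f)\subseteq S$ and $f\succeq_w\mu(w)$ by comparing the slots of the $g$-firms, while if $w\in\mu(f)$ then $\overline\mu(w)=g_Q$ for a primitive acceptable set $Q$ of $f$ with $w\in Q\subseteq\mu(f)\cap C_{i^*}^f\subsetneq P^*$, whence $P^*\succ_f Q$ by monotonicity and $g_{P^*}\succ_w g_Q$. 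This contradicts the stability of $\overline\mu$, so $\mu$ is stable. I expect the separability lemma and the enlargement $S^*=Ch_f(S\cup\mu(f))$ to be the substantive steps; transporting individual rationality and blocking coalitions between $\Gamma$ and $\overline\Gamma$ is routine bookkeeping, much as for Theorem \ref{thm_exist1}.
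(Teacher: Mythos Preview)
Your proof is correct and follows the same conceptual route as the paper: establish that $Ch_f$ separates across the components of $G_f$, decompose each firm accordingly, apply Theorem~\ref{thm_exist1} to the decomposed market, and lift the resulting stable matching back to $\Gamma$ using the enlargement $S^*=Ch_f(S\cup\mu(f))$ projected onto a component. The one real difference is granularity. The paper (Section~\ref{Sec_decomp2}) decomposes each firm into one subsidiary per \emph{component} of $G_f$, equipping each subsidiary $\check f$ with the restricted choice function $Ch_{\check f}(S)=Ch_f(S)\cap W^{\check f}$; each subsidiary keeps a full complementary preference with possibly many acceptable sets, and Theorem~\ref{thm_exist1} is then invoked --- which internally performs a \emph{second} decomposition, one firm per acceptable set, via Lemma~\ref{lma_decomp1}. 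You collapse these two steps into one by decomposing directly into one firm per \emph{primitive acceptable set}. Both arguments rest on the same two substantive facts: component-separability of $Ch_f$ (which the paper asserts without proof as ``$f$'s choice over $W^{f^*}$ is independent of other workers outside this set'', and which you prove explicitly) and the $Ch_f(S\cup\mu(f))$ enlargement to locate a blocking component. Neither approach buys anything the other does not; the paper's two-stage version keeps the subsidiary preferences canonical and lets Lemma~\ref{lma_decomp1} absorb the ordering bookkeeping, while your one-stage version is more direct but requires tracking the order among the $g_P$ within $f$'s slot and checking by hand that the right $g_{P^*}$ is unmatched.
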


We prove Theorem \ref{thm_exist3} in Section \ref{Sec_decomp2} of the Appendix by decomposing each firm into its independent subsidiaries. We can analogously generalize H22's total unimodularity condition by examining the structure of substitutabilities and complementarities inside each firm. For instance, consider Example 2 of H22. If we view $f_1$ as two separate firms, total unimodularity of firms' demand type also recovers.

\subsection{Additive preferences}\label{Sec_add}

This paper aims to provide a sufficient condition on firms' complementary preferences to guarantee the existence of a stable matching. While we find that the balancedness constraint on firms' complementary preferences guarantees the existence of a stable matching, our proof techniques indicate that stable matchings exist when the balancedness constraint is imposed on a more general class of firms' preferences defined as follows.

\begin{definition}\label{def_additive}
\normalfont
Firm $f$ has an \textbf{additive} preference if for any $S,S'\in \mathcal{A}_f$ with $S\cap S'=\emptyset$, $S\cup S'\in \mathcal{A}_f$.
\end{definition}

A firm has an additive preference if the union of any two disjoint acceptable sets is also acceptable for the firm. Notice that complementary preference is equivalent to the above definition without ``$S\cap S'=\emptyset$''; thus, additive preferences subsume complementary preferences. Firms' additive preferences allow workers to be substitutes. For example, consider the following firms' preferences.

\begin{equation}\label{exam_add}
\begin{aligned}
&f_1: \{w_1,w_2\}\succ\{w_2,w_3\}\succ\emptyset\\
&f_2: \{w_1,w_2,w_3,w_4\}\succ\{w_1,w_2\}\succ\{w_2,w_3\}\succ\{w_3,w_4\}\succ\emptyset
\end{aligned}
\end{equation}
$f_1$ has an additive preference since $f_1$ has no disjoint acceptable sets. $f_2$ has a pair of disjoint acceptable sets: $\{w_1,w_2\}$ and $\{w_3,w_4\}$; $f_2$'s preference is additive since $\{w_1,w_2,w_3,w_4\}$ is also acceptable for $f_2$.
Neither $f_1$ nor $f_2$ has a complementary preference: $f_1$ wants to substitute $w_3$ by $w_1$ when it is matched with $\{w_2,w_3\}$; $f_2$ wants to substitute $w_4$ by $w_2$ when it is matched with $\{w_3,w_4\}$; $f_2$ also wants to substitute $w_3$ by $w_1$ when it is matched with $\{w_2,w_3\}$.

Recall that we only use additivity of firms' complementary preferences in Section \ref{Sec_proof} (and Section \ref{Sec_decomp1}) to prove Theorem \ref{thm_exist1}.\footnote{See footnote \ref{foot_add}.} Therefore, we can generalize Theorem \ref{thm_exist1} into the following theorem.

\begin{theorem}\label{thm_additive}
\normalfont
There always exists a stable matching if firms have additive preferences and firms' acceptable sets form a balanced matrix.\footnote{A similar generalization as Section \ref{Sec_gen} applies to this result; however, we omit this generalization.}
\end{theorem}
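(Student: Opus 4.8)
The plan is to mimic the proof architecture already established for Theorem~\ref{thm_exist1}, checking at each step that additivity is all that the argument actually consumed. First I would carry over the decomposition construction of Section~\ref{Sec_decomp1}: given a market $\Gamma$ in which every firm has an additive preference, replace each firm $f$ by one firm $f^j$ for each acceptable set $A^j\in\mathcal{A}_f$, where $f^j$ has the single acceptable set $A^j$, and splice these copies into every worker's preference list at $f$'s original position, ordered consistently with the order of the $A^j$'s under $\succ_f$. Call the resulting market $\overline{\Gamma}$. Since the acceptable sets of the firms in $\overline{\Gamma}$ are exactly the acceptable sets of the firms in $\Gamma$, the hypothesis that $\mathcal{A}$ forms a balanced matrix is inherited verbatim by $\overline{\Gamma}$.

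The second step is the key correspondence lemma: stable matchings of $\overline{\Gamma}$ are in bijective (or at least surjective) correspondence with stable matchings of $\Gamma$ via ``re-gluing'' — if $f^j$ is matched to a nonempty $S$ in $\overline{\mu}$, match $f$ to $S$ in $\mu$. The crucial fact needed here is that in any stable matching of $\overline{\Gamma}$ at most one of the copies $f^1,\dots,f^{k}$ of a given $f$ is matched to a nonempty set. This is exactly where additivity enters, and it is the only place: if two copies $f^i,f^j$ were matched to disjoint nonempty sets $S_i,S_j$ (they must be disjoint, since workers have unit demand), then by additivity $S_i\cup S_j\in\mathcal{A}_f$, so some copy $f^\ell$ has $S_i\cup S_j$ as its unique acceptable set; $f^\ell$ is matched to $\emptyset$, and since every worker ranks $f^\ell$ above $f^i$ and $f^j$ (this is built into the decomposition when the sets are nested; for the disjoint case one needs the copies to be inserted so that the copy with the larger set precedes the copies with the smaller sets — so I should state the insertion order as "consistent with $\succ_f$" and verify $S_i\cup S_j\succ_f S_i$ and $S_i\cup S_j\succ_f S_j$ follow from $S_i\cup S_j=Ch_f(S_i\cup S_j)$), the coalition $(f^\ell, S_i\cup S_j)$ blocks $\overline{\mu}$, a contradiction. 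Once "at most one copy active" is known, individual rationality and the blocking-coalition conditions transfer in both directions by a routine check, exactly as in Example~\ref{exam_decomp}.

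The third step is to invoke the existence result for $\overline{\Gamma}$. Every firm in $\overline{\Gamma}$ has exactly one acceptable set, hence trivially has a complementary preference, and $\mathcal{A}(\overline{\Gamma})$ is balanced; so Theorem~\ref{thm_exist1} applies and yields a stable matching $\overline{\mu}$ of $\overline{\Gamma}$. Re-gluing gives a stable matching of $\Gamma$, which is the claim. (Alternatively one can bypass the reduction to Theorem~\ref{thm_exist1} and apply the H22/\cite{FHO74} balanced-matrix machinery of Section~\ref{proof_thm1} directly to the continuum market induced by $\overline{\Gamma}$; but routing through Theorem~\ref{thm_exist1} is cleaner since that theorem already packages step~(ii).)

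I expect the main obstacle to be the second step — specifically, getting the decomposition's insertion order right so that the "at most one active copy" argument covers disjoint (not merely nested) acceptable sets. Under complementary preferences the acceptable sets of a firm are automatically closed under union, so in the original Theorem~\ref{thm_exist1} proof the dangerous configuration is essentially a nested chain and the "larger set ranked higher" insertion is natural; under mere additivity the union of disjoint acceptable sets is acceptable but the family $\mathcal{A}_f$ need not be a chain, so I must choose a linear order on $\mathcal{A}_f$ refining set-inclusion (e.g. any linear extension of $\supseteq$, breaking ties by $\succ_f$) and insert the copies in that order into each worker's list. Then for any $f^i,f^j$ both active with disjoint sets, the copy $f^\ell$ carrying $S_i\cup S_j\supsetneq S_i$ precedes $f^i$ in every worker's list, and symmetrically precedes $f^j$, which is what the blocking argument needs. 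Verifying that this single global order works simultaneously in every worker's preference list, and that the re-glued matching is genuinely individually rational for $f$ in $\Gamma$ (using $S=Ch_f(S)$ because $S\in\mathcal{A}_f$ by construction), is the one place demanding care; everything else is bookkeeping parallel to the already-written proof of Theorem~\ref{thm_exist1}.
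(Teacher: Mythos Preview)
Your proposal is correct and follows the same route as the paper: the paper explicitly notes (footnote~\ref{foot_add} and the parenthetical remarks in the proof of Lemma~\ref{lma_decomp1}) that the decomposition argument of Section~\ref{Sec_decomp1} uses only additivity, so Theorem~\ref{thm_additive} is obtained by rereading that proof verbatim. Your step~3 --- applying Theorem~\ref{thm_exist1} to $\overline{\Gamma}$, whose single-acceptable-set firms are trivially complementary --- is a clean way to package the conclusion.

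One remark on your ``main obstacle'': the worry about the insertion order is unnecessary. The paper inserts the copies in the order given by $\succ_f$ itself, and this already refines $\supseteq$ on $\mathcal{A}_f$: if $S,S'\in\mathcal{A}_f$ with $S\subsetneq S'$, then $Ch_f(S')=S'$ forces $S'\succ_f S$. In particular, for disjoint $S_i,S_j\in\mathcal{A}_f$ with $S_i\cup S_j\in\mathcal{A}_f$ (additivity), $Ch_f(S_i\cup S_j)=S_i\cup S_j$ gives $S_i\cup S_j\succ_f S_i$ and $S_i\cup S_j\succ_f S_j$, so the copy carrying $S_i\cup S_j$ is automatically ranked above the copies carrying $S_i$ and $S_j$ in every worker's list. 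No separate linear extension of $\supseteq$ is needed; you already observed this parenthetically, so you can drop the hedging in the final paragraph.
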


For example, the following is the acceptable-set hypergraph for (\ref{exam_add}). Every odd-length cycle contains $\{w_1,w_2,w_3,w_4\}$ that has all vertices of the cycle. Hence, (\ref{exam_add}) admits a stable matching for all possible preferences of workers.

\begin{center}
\begin{tikzpicture}[scale=0.9]
    \node (v1) at (0,3) {};
    \node (v2) at (3,3) {};
    \node (v4) at (0,0) {};
    \node (v3) at (3,0) {};

    \begin{scope}[fill opacity=0.8]
    \filldraw[fill=yellow!70] ($(v1)+(0,1.5)$)
        to[out=0,in=90] ($(v2) + (1.3,0)$)
        to[out=270,in=0] ($(v3) + (0,-1.3)$)
        to[out=180,in=270] ($(v4) + (-1.5,0)$)
        to[out=90,in=180] ($(v1) + (0,1.5)$);
    \filldraw[fill=red!70] ($(v1)+(-0.6,0)$)
        to[out=90,in=180] ($(v2)+(0,0.7)$)
        to[out=0,in=90] ($(v2)+(0.8,0)$)
        to[out=270,in=0] ($(v2)+(0,-0.7)$)
        to[out=180,in=270] ($(v1)+(-0.6,0)$);
    \filldraw[fill=green!80] ($(v4)+(-0.6,0)$)
        to[out=90,in=180] ($(v3)+(0,0.7)$)
        to[out=0,in=90] ($(v3)+(0.8,0)$)
        to[out=270,in=0] ($(v3)+(0,-0.8)$)
        to[out=180,in=270] ($(v4)+(-0.6,0)$);
    \filldraw[fill=blue!70] ($(v2)+(0,0.7)$)
        to[out=0,in=90] ($(v3)+(0.4,0)$)
        to[out=270,in=0] ($(v3)+(0,-0.5)$)
        to[out=180,in=180] ($(v2)+(0,0.7)$);
    \end{scope}

    \foreach \v in {1,2,3,4} {
        \fill (v\v) circle (0.1);
    }

    \fill (v1) circle (0.1) node [right] {$w_1$};
    \fill (v2) circle (0.1) node [below left] {$w_2$};
    \fill (v3) circle (0.1) node [below] {$w_3$};
    \fill (v4) circle (0.1) node [below] {$w_4$};
\end{tikzpicture}
\end{center}

\section{Application\label{Sec_app}}

This section presents an application of the balancedness condition. We follow H22 to describe the structure of firms' preferences as a ``technology tree". Each vertex of the technology tree represents a technology that requires a set of workers to implement. Each edge of the tree is an upgrade from one technology to another that demands more workers. A worker is called a specialist by H22 if the worker engages in only one upgrade. H22 showed that firms' demand type is totally unimodular if firms' acceptable sets are from a technology tree where each worker is a specialist. This section shows that workers can be more versatile in a technology tree that induces a balanced preference profile: Firms' primitive acceptable sets form a balanced matrix if they are from a technology tree where each worker engages in only ``a neighbour of upgrades''. The following is an illustrative example.

\begin{example}\label{exam_app}
\normalfont
A technology tree is depicted as follows. Each vertex from $\{v_0,v_1,v_2,v_3,v_4\}$ represents a technology that requires the set of workers on the right to implement.

\begin{center}
\begin{tikzpicture}[thick,->,]
\tikzstyle{level 1}=[sibling distance=30mm]
\tikzstyle{level 2}=[sibling distance=20mm]

\node {$v_0:\emptyset$}
	child {node {$v_1:\{w_1,w_2\}$}}
   	child {node {$v_2:\{w_2,w_3\}$}}
    child { node {$v_3:\{w_3,w_4\}$}
		child [missing] {}
		child [missing] {}
		child {node {$v_5:\{w_3,w_4,w_5\}$}
			child [missing] {}
		}
	};
	\end{tikzpicture}
\end{center}

The root $v_0$ represents no technology and requires no worker. Each directed edge is an upgrade from one technology to another that requires more workers. If $e=vv'$ is an edge from vertex $v$ to vertex $v'$, where $w$ is not demanded by $v$ but demanded by $v'$, we say that $w$ engages in upgrade $e$ or $vv'$. For example, $w_2$ engages in upgrades $v_0v_1$ and $v_0v_2$ of the above technology tree. H22 imposed the following constraint on the technology tree:

\begin{equation}\label{specialist}
\text{\emph{Each worker is a specialist that engages in only one upgrade.}}
\end{equation}

\cite{H21b} studied generalizations of H22's market structure. In this section, we present another generalization of (\ref{specialist}). A set of edges is called a neighbour of upgrades if these edges are from the same vertex and not separated by any other upgrade.\footnote{This notion implies that the technology tree is an ordered tree. See the formal definitions below.} For instance, each of $\{v_0v_1\}$, $\{v_0v_2,v_0v_3\}$, and $\{v_0v_1,v_0v_2,v_0v_3\}$ is a neighbour of upgrades, whereas $\{v_0v_1,v_0v_3\}$ is not a neighbour of upgrades since $v_0v_1$ and $v_0v_3$ are separated by $v_0v_2$. We say that $w$ engages in a set $E'$ of upgrades if she engages in every upgrade from $E'$. We relax condition (\ref{specialist}) into the following one.

\begin{equation}\label{neighbour}
\text{\emph{Each worker is a specialist that engages in only a neighbour of upgrades.}}
\end{equation}

The above technology tree satisfies (\ref{neighbour}): Each worker from $\{w_1,w_4,w_5\}$ engages in only one upgrade; $w_2$ and $w_3$ both engage in a neighbour of upgrades. This section shows that if a technology tree satisfies (\ref{neighbour}), worker sets from this tree form a balanced matrix. Consequently, we know that stable matchings exist when firms have complementary preferences and firms' primitive acceptable sets are from a technology tree satisfying (\ref{neighbour}). According to the meaning of primitive acceptable sets discussed in Section \ref{Sec_gen}, we are assuming that each independent subsidiary of each firm wants to implement a technology from a common technology tree that satisfies (\ref{neighbour}). For instance, consider the following firms' complementary preferences.

\begin{equation}\label{pre_app}
\begin{aligned}
&f_1: \{w_1,w_2,w_3,w_4\}\succ \{w_1,w_2\}\succ\{w_3,w_4\}\succ\emptyset\\
&f_2: \{w_3,w_4,w_5\}\succ \{w_3,w_4\}\succ \emptyset
\end{aligned}
\end{equation}
\end{example}
$f_1$ has two primitive acceptable sets: $\{w_1,w_2\}$ and $\{w_3,w_4\}$; $f_2$ has two primitive acceptable sets: $\{w_3,w_4\}$ and $\{w_3,w_4,w_5\}$. All firms' primitive acceptable sets come from the above technology tree, and thus (\ref{pre_app}) admits a stable matching for all possible preferences of workers. To see the role of (\ref{neighbour}), consider the following technology tree.
\begin{center}
\begin{tikzpicture}[thick,->]
	\node {$v_0:\emptyset$}
	child {node {$v_1:\{w_1,w_2\}$}}
child [missing] {}
   	child { node {$v_2:\{w_2,w_3\}$}}
   child [missing] {}
    child { node {$v_3:\{w_1,w_3\}$}};
\end{tikzpicture}
\end{center}
This tree violates (\ref{neighbour}): Although $w_2$ and $w_3$ both engage in a neighbour of upgrades, $w_1$ engages in both $v_0v_1$ and $v_0v_3$ that are not neighbours.\footnote{There is no permutation of the three edges that recovers (\ref{neighbour}) either.} This tree induces market (\ref{exam_in1}), which does not admit a stable matching.\footnote{(\ref{exam_in1}) may be induced from other technology trees. Readers can check that none of them satisfies (\ref{neighbour}).}

Formally, a \textbf{technology tree} $T=(V,E,>_V,W)$ is a directed ordered tree $(V,E,>_V)$ defined on a set of workers $W$.\footnote{An ordered tree is a tree where the children of every vertex are ordered, that is, there is a first child, second child, third child, etc.} $V=\{v_0,v_1,\ldots,v_l\}$ is a set of vertices with $v_0$ as the root. Each vertex $v\in V$ represents a \textbf{technology} and requires a subset of workers $W^v\subseteq W$ to implement. The root $v_0$ represents no technology and requires no worker: $W^{v_0}=\emptyset$. $E$ is a set of directed edges, all of which point away from the root. For each edge $e\in E$ from vertex $v$ to vertex $v'$, $W^v\subset W^{v'}$, and we let $W^e=W^{v'}\setminus W^{v}$. We say that $w$ engages in upgrade $e$ if $w\in W^e$. For each $v\in V$, let $E_{v\rightarrow}$ denote the set of edges that point from $v$. For each vertex $v\in V$, there is a complete and transitive order $>_v$ over $E_{v\rightarrow}$. In this section, we draw technology trees such that $e$ is on the left of $e'$ for any $e,e'\in E_{v\rightarrow}$ if $e>_ve'$. A set of edges $E'\subseteq E_{v\rightarrow}$ for some $v\in V$ is called \textbf{a neighbour of upgrades} if there exist no $e,e',e''\in E_{v\rightarrow}$ such that $e,e''\in E'$, $e'\notin E'$, and $e>_ve'>_ve''$.

We find that worker sets on a technology tree form a balanced matrix when each worker engages in only a neighbour of upgrades.

\begin{theorem}\label{thm_neighbour}
\normalfont
In a technology tree where each worker engages in only a neighbour of upgrades, elements from $\{W^v|v\in V\}$ form a balanced matrix.\footnote{More specifically, elements from $\{W^v|v\in V\}$ form a special balanced matrix called \emph{totally balanced}; see Section \ref{proof_thm4}. A 0-1 matrix is totally balanced if it does not contain as a submatrix the incidence matrix of any cycle of length at least 3.}
\end{theorem}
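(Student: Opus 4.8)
The plan is to show directly that the matrix formed by $\{W^v \mid v\in V\}$ contains, as a submatrix, the incidence matrix of no cycle of length at least $3$; this gives the stronger ``totally balanced'' conclusion, hence balancedness. Suppose toward a contradiction that there are workers $w^1,\ldots,w^k$ ($k\geq 3$) and vertices $v^1,\ldots,v^k$ forming such a cycle, meaning (after relabeling) $w^i,w^{i+1}\in W^{v^i}$ for $i<k$, $w^k,w^1\in W^{v^k}$, and each $w^i$ lies in exactly two of the sets $W^{v^1},\ldots,W^{v^k}$. I would first record the key geometric fact about a rooted tree: for any vertex $v$, the set $W^v$ is the union of $W^e$ over all edges $e$ on the unique path from the root $v_0$ to $v$. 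Thus each membership $w\in W^v$ is ``witnessed'' by a unique edge $e(w,v)$ on the root-to-$v$ path with $w\in W^{e}$, i.e. the edge at which $w$ is first acquired along that path. The specialist hypothesis is crucial here: each worker $w$ engages in only a neighbour of upgrades, and in particular all edges in which $w$ engages share a common tail vertex $t(w)$; so the witnessing edge $e(w,v)$ always has tail $t(w)$, and $v$ is a (weak) descendant of $t(w)$ through one of the children of $t(w)$ corresponding to an upgrade $w$ engages in.

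Next I would exploit the tree structure to derive a contradiction from the cyclic incidence pattern. Consider the vertices $v^1,\ldots,v^k$ in the tree. For the cycle to close up, consecutive sets must share a worker, which forces strong constraints on how the $v^i$ sit relative to one another. The main idea is a minimality/extremality argument: pick a vertex $v^i$ that is deepest (maximal in the tree order) among $v^1,\ldots,v^k$, or whose relevant witnessing edges are extreme in the ordering $>_{t(\cdot)}$ among its neighbours. I expect that examining the two workers $w^i, w^{i+1}$ that $v^i$ must contain, together with the neighbour-of-upgrades condition, shows that one of these two workers cannot also belong to the other set in the cycle that is supposed to contain it, because that other set's witnessing vertex lies on an incomparable branch or on the ``wrong side'' of the ordering at the common tail vertex. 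Concretely: if $w^i$ is acquired at edge $e$ with tail $t$, then any $W^v$ containing $w^i$ has $v$ below one of the $>_t$-consecutive children-edges that $w^i$ engages in; chasing which children-edges are forced by the cycle and using that the engaged edges of each worker are consecutive should collapse the cycle to length $\leq 2$.

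The hard part will be organizing this case analysis cleanly — in a general tree the $v^i$ can be spread across many branches, and one must simultaneously track (a) the tree-ancestry relations among the $v^i$, (b) which worker is acquired on which edge, and (c) the linear order $>_v$ at each branching vertex that makes ``neighbour of upgrades'' meaningful. I would try to reduce clutter by first showing the cycle's vertices must all be comparable or all hang off a single path (using that each worker's engaged edges share one tail), turning the problem into a one-dimensional one along a root-to-leaf path, where the neighbour-of-upgrades condition becomes an interval condition and the incidence matrix of a cycle of length $\geq 3$ is manifestly forbidden by a Helly-type / interval-graph argument. An alternative, possibly cleaner route is induction on $|V|$: remove a leaf edge $e$; if some $w$ engages in $e$ and the cycle uses a set $W^v$ witnessed at $e$, argue the leaf can be pruned without destroying the (totally balanced) property, using that such $w$ engages only in neighbours of $e$. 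I would pursue whichever of these two framings keeps the ordered-tree bookkeeping lightest; the interval/path reduction is my first choice.
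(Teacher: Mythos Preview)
Your setup is right: aiming for total balancedness, and recording that each worker $w$ has a unique tail vertex $t(w)$ so that $w\in W^v$ forces $t(w)$ to be a (weak) ancestor of $v$. But the reduction you propose is aimed at the wrong one-dimensional structure, and you are missing the clean lemma that makes the proof short.

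The paper's proof hinges on first showing that the tails coincide: $t(w^1)=t(w^2)=\cdots=t(w^k)=:\widetilde v$. This is a two-line argument. Since $w^i,w^{i+1}\in W^{v^i}$, both $t(w^i)$ and $t(w^{i+1})$ lie on the root-to-$v^i$ path, so they are $\lhd$-comparable; if, say, $t(w^i)\lhd t(w^{i+1})$ strictly, then the root-to-$v^{i+1}$ path (which passes through $t(w^{i+1})$) also passes through the unique edge out of $t(w^i)$ on the way to $t(w^{i+1})$, an edge that $w^i$ engages in, so $w^i\in W^{v^{i+1}}=S^{i+1}$, contradicting the cycle pattern. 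The symmetric case is ruled out by $w^{i+1}\notin S^{i-1}$. Hence all tails agree. You never isolate this step; your ``deepest $v^i$'' heuristic does not get you there, since the $v^i$ are in general pairwise \emph{incomparable}.

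Once $\widetilde v$ is fixed, each $v^i$ lies below a single edge $e^i\in E_{\widetilde v\rightarrow}$, and $w^i\in S^{i-1}\cap S^i$ forces $w^i$ to engage in $e^{i-1}$ and $e^i$ and in no other $e^j$. So the relevant interval structure lives in the linear order $>_{\widetilde v}$ on the children edges of $\widetilde v$, \emph{not} along a root-to-leaf path as you suggest. On that linearly ordered set, each $w^i$ occupies a $>_{\widetilde v}$-interval containing $e^{i-1},e^i$ and excluding the rest, and the paper finishes with a short monotonicity induction (equivalently, an interval-graph argument: intervals on a line cannot realise the incidence pattern of a cycle of length $\geq 3$). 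Your Helly/interval instinct is exactly right once pointed at $E_{\widetilde v\rightarrow}$; the ``reduce to a root-to-leaf path'' and ``prune a leaf'' routes are unnecessary detours and, as stated, rest on a false comparability claim about the $v^i$.
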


Given a matching market $\Gamma$, let $\mathcal{A}^p_f$ be the collection of $f$'s primitive acceptable sets for each $f\in F$.  We say that $f$'s primitive acceptable sets are from a technology tree $T=(V,E,>_V,W)$ if $S\in \mathcal{A}^p_f$ implies $S=W^v$ for some $v\in V$. According to Theorem \ref{thm_exist3} and Theorem \ref{thm_neighbour}, we know that stable matchings exist when firms have complementary preferences and firms' primitive acceptable sets are from a technology tree satisfying (\ref{neighbour}).\footnote{We can generalize this statement slightly by only require firms' non-singleton primitive acceptable sets to be from a proper technology tree. For example, if $f_2$ has preference $\{w_3,w_4,w_5\}\succ \{w_3\}\succ \emptyset$ in (\ref{pre_app}), all firms' non-singleton primitive acceptable sets are from the technology tree in Example \ref{exam_app}, and thus firms' primitive acceptable sets form a balanced matrix.} Similarly, according to Theorem \ref{thm_additive} and Theorem \ref{thm_neighbour}, we also know that stable matchings exist when firms have additive preferences and firms' acceptable sets are from a technology tree that satisfies (\ref{neighbour}). Technology trees satisfying (\ref{neighbour}) can also induce complementary preference profiles that fail H22's total unimodularity condition. For example, consider the following technology tree, which satisfies (\ref{neighbour}).
\begin{center}
\begin{tikzpicture}[thick,->]
\tikzstyle{level 1}=[sibling distance=20mm]
\tikzstyle{level 2}=[sibling distance=35mm]
	\node {$v_0:\emptyset$}
	child [missing] {}
   	child { node {$v_1:\{w_3\}$}
        child {node {$v_2:\{w_1,w_3\}$}}
		child {node {$v_3:\{w_1,w_2,w_3\}$}}
		child {node {$v_4:\{w_2,w_3\}$}}
		}
   child [missing] {};	
	\end{tikzpicture}
\end{center}

Firms' complementary preferences in (\ref{H22}) can be induced from this technology tree. (\ref{H22}) satisfies the balancedness condition but fails H22's total unimodularity condition as we explain in Example \ref{exam_(i)}.

\bigskip

\emph{(The following remark does not belong to the topic of complementary preferences and will be moved into another working paper.)}

\begin{remark}
\normalfont

We obtain another existence result by introducing a notion of firm-worker hypergraphs. A \textbf{firm-worker hypergraph} is a hypergraph $(F\cup W,\mathcal{Y})$ in which each vertex is either a firm or a worker, and each edge contains one firm and some workers that form one of the firm's acceptable sets:
$\mathcal{Y}=\{\{f\}\cup S|f\in F \text{ and } S\in \mathcal{A}_f\}$. We call a hypergraph balanced if every odd-length cycle has an edge containing at least three vertices of the cycle.

\begin{theorem}\label{thm_balanced}
\normalfont
There always exists a stable matching if the firm-worker hypergraph is balanced.
\end{theorem}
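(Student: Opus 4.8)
The plan is to adapt the argument behind Theorem~\ref{thm_exist1}, but to apply H22's method \emph{directly} to $\Gamma$, skipping the firm-decomposition of Section~\ref{Sec_decomp1}. First I would form H22's fictitious continuum market $\widehat{\Gamma}$ in which workers are divisible and a ``match'' is a pair $(f,S)$ with $f\in F$ and $S\in\mathcal{A}_f$ (together with the null option); a fractional matching is a nonnegative vector $x=(x_{f,S})$ plus unmatched masses. Since $\widehat{\Gamma}$ is an instance of the continuum market of \cite{CKK19}, it has a stable (possibly fractional) matching $\widehat{\mu}$. The reason for leaving each firm intact is that in $\widehat{\Gamma}$ firm $f$ carries the single feasibility constraint $\sum_{S\in\mathcal{A}_f}x_{f,S}\le 1$: the firm vertices of the firm-worker hypergraph already encode ``at most one acceptable set per firm'', so no firm needs to be split and, in contrast to Theorems~\ref{thm_exist1}, \ref{thm_exist3}, and \ref{thm_additive}, no complementarity or additivity of $\succ_F$ is required.

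Next I would extract from $\widehat{\mu}$ its stable cutoffs --- for each worker $w$ the $\succ_w$-worst firm she is matched to with positive mass, and for each firm $f$ the $\succ_f$-worst acceptable set it takes with positive mass --- and consider the polytope $Q$ of all fractional matchings that (i) respect these cutoffs (a worker is sent only to firms weakly above her cutoff, a firm takes only acceptable sets weakly above its cutoff), (ii) satisfy $\sum_{S\in\mathcal{A}_f}x_{f,S}\le 1$ for every $f\in F$ and $\sum_{(f,S):\,w\in S}x_{f,S}\le 1$ for every $w\in W$, and (iii) meet these last constraints with equality for every $f$ and every $w$ that is ``full'' under $\widehat{\mu}$. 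As in H22 one shows that every integral point of $Q$ is a stable matching of $\widehat{\Gamma}$ (a blocking pair would have to use a match forbidden by (i) or violate (ii)--(iii)), and $\widehat{\mu}\in Q$, so $Q\neq\emptyset$. The constraint matrix defining $Q$ is obtained from the incidence matrix of the firm-worker hypergraph $(F\cup W,\mathcal{Y})$ by deleting the columns (edges) $\{f\}\cup S$ that violate some cutoff; a submatrix of a balanced matrix is balanced, so this matrix is balanced.

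Then I would invoke the theorem of \cite{FHO74} (Theorem~21.7 of \citealp{S86}): for a balanced $0$-$1$ matrix $A$ and any partition of its rows, the polyhedron $\{x\ge 0:\ A^{1}x\le\mathbf{1},\ A^{2}x=\mathbf{1},\ A^{3}x\ge\mathbf{1}\}$ has only integral vertices. Hence $Q$ is a nonempty bounded polytope with integral vertices, so it contains an integral point $\mu^{\star}$; by the previous paragraph $\mu^{\star}$ is a stable integral matching of $\widehat{\Gamma}$, which by H22's correspondence yields a stable matching of $\Gamma$.

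The step I expect to be the main obstacle is the second one: showing that stability of $\widehat{\mu}$ in the continuum sense is captured exactly by membership in a polytope whose constraints are the firm-worker hypergraph incidence system --- in particular, that the firm-vertex rows $\sum_{S}x_{f,S}\le 1$ (with equality when $f$ is full) are the only firm-side constraints needed even though $\succ_f$ ranks arbitrary subsets of $W$. In H22's original proof this is immediate because each decomposed firm has a unique acceptable set; here one must verify that truncating $\mathcal{A}_f$ at $f$'s cutoff leaves precisely the edges through vertex $f$ that a blocking coalition could exploit, and that an integral point of $Q$ admits no blocking coalition. Getting the placement of the equality rows right (full firms and fully matched workers) so that the \cite{FHO74} theorem applies, and checking that the resulting integral matching is individually rational for firms (namely $\mu^{\star}(f)=Ch_f(\mu^{\star}(f))$), is the delicate part; the balancedness hypothesis then supplies integrality for free.
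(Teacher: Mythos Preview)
Your proposal is correct and follows essentially the same route as the paper. The paper's own argument for Theorem~\ref{thm_balanced} is a two-line remark: when the firm-worker hypergraph is balanced, the matrix $B$ of H22 (Sections~3.3 and~5.2 there), applied directly to $\Gamma$ without any firm decomposition, is balanced, and then H22's machinery together with \cite{FHO74} yields a stable integral matching. Your write-up simply unpacks what ``H22's machinery'' is doing: one builds the continuum market $\widehat{\Gamma}$, picks a stable fractional matching via \cite{CKK19}, and considers the polytope of fractional matchings compatible with the induced cutoffs; the constraint matrix of that polytope is (after deleting the single-$1$ columns for $\emptyset$ and the null firm) a column-submatrix of the incidence matrix of $(F\cup W,\mathcal{Y})$, hence balanced, and \cite{FHO74} furnishes an integral vertex.

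Your identification of the crucial point is also on target. The reason the firm decomposition of Section~\ref{Sec_decomp1} is unnecessary here is exactly that the firm vertices already sit inside the hypergraph: the row for $f$ enforces $\sum_{S\in\mathcal{A}_f}x_{f,S}\le 1$ (equality when $f$ is full), so H22's ``one-choice-per-firm'' constraint is part of the balanced system rather than something achieved through additivity/complementarity of $\succ_f$. The step you flag as delicate --- that an integral point of $Q$ is genuinely a stable matching of $\Gamma$, including $\mu^\star(f)=Ch_f(\mu^\star(f))$ --- is precisely the content of H22's Lemma~2 and the stable-transformation argument, which the paper invokes without reproving; your caution there is warranted but not a gap in the plan.
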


If the firm-worker hypergraph is balanced, matrix $B$ in Section 3.3 and Section 5.2 of H22 is balanced. The existence of a stable matching follows immediately from H22's arguments. Consider the following example from \cite{CKK19} and the firm-worker hypergraph induced from the firms' preferences.
\begin{equation}\label{exam_ckk}
\begin{aligned}
&f_1: \{w_1,w_2\}\succ\emptyset \qquad\qquad\qquad\qquad &w_1: &\quad f_1\succ f_2\\
&f_2: \{w_1\}\succ\{w_2\}\succ\emptyset \qquad\qquad\qquad\qquad &w_2: &\quad f_2\succ f_1
\end{aligned}
\end{equation}
\begin{center}
\begin{tikzpicture}[scale=0.8]
    \node (v1) at (-1,0) {};
    \node (v2) at (2,3.46) {};
    \node (v3) at (5,0) {};
    \node (v4) at (2,2) {};
    \node (v5) at (2,-1) {};
    \node (v6) at (2,0) {};

    \begin{scope}[fill opacity=0.8]
    \filldraw[fill=red!70] ($(v2)+(0,0.8)$)
        to[out=0,in=60] ($(v4)$)
        to[out=240,in=0] ($(v1) + (0,-0.8)$)
        to[out=180,in=180] ($(v2) + (0,0.8)$);
    \filldraw[fill=yellow!80] ($(v2)+(0,1)$)
        to[out=180,in=120] ($(v4)$)
        to[out=300,in=180] ($(v3) + (0,-0.8)$)
        to[out=0,in=0] ($(v2) + (0,1)$);
    \filldraw[fill=blue!70] ($(v1)+(-0.6,0)$)
        to[out=90,in=180] ($(v4)+ (0,-0.9)$)
        to[out=0,in=90] ($(v3) + (0.6,0)$)
        to[out=270,in=0] ($(v5)$)
        to[out=180,in=270] ($(v1) + (-0.6,0)$);
    \end{scope}

    \foreach \v in {1,2,3} {
        \fill (v\v) circle (0.1);
    }

    \fill (v1) circle (0.1) node [right] {$w_1$};
    \fill (v2) circle (0.1) node [below left] {$f_2$};
    \fill (v3) circle (0.1) node [left] {$w_2$};
    \fill (v6) circle (0.1) node [below] {$f_1$};

\end{tikzpicture}
\end{center}
\medskip

Market (\ref{exam_ckk}) has no stable matching.\footnote{In this market, $f_1$ should hire both workers or neither of them in any stable matching. In the former case, $f_2$ would form a blocking coalition with $w_2$, who prefers $f_2$ to $f_1$. In the latter case, $f_2$ would be matched with $w_1$, leaving $w_2$ unmatched. Then, $f_1$ would form a blocking coalition with both $w_1$ and $w_2$.} The nonexistence of a stable matching in this market is attributed to the above odd-length cycle. In a subsequent work, we study balanced firm-worker hypergraphs and show that balancedness of the firm-worker hypergraph is independent of total unimodularity of firms' demand type.
\end{remark}

\section{Appendix}

\subsection{Firm decomposition I}\label{Sec_decomp1}

Given a matching market $\Gamma=(W,F,\succ_W,\succ_F)$ in which firms have complementary preferences, we construct a new market $\overline{\Gamma}=(W,\overline{F},\overline{\succ}_{W},\overline{\succ}_{\overline{F}})$ where each firm $f\in F$ in $\Gamma$ is decomposed into a set $H_f$ of $|\mathcal{A}_f|$ firms: $H_f=\{f^1,\ldots,f^{|\mathcal{A}_f|}\}$.\footnote{If $f$ has only one acceptable set in $\Gamma$, then $H_f=\{f^1\}$ contains only one firm. This exposition is different from Example \ref{exam_decomp}.} For each $f\in F$, each $\overline{f}\in H_f$ has only one acceptable set such that $\cup_{\overline{f}\in H_f}\mathcal{A}_{\overline{f}}=\mathcal{A}_f$.

Let $A_{\overline{f}}$ denote $\overline{f}$'s unique acceptable set for each $\overline{f}\in \overline{F}$. We modify workers' preference profile $\succ_{W}$ of $\Gamma$ into $\overline{\succ}_{W}$ of $\overline{\Gamma}$ as follows. For any $\overline{f}\in H_f$, any $\overline{f}'\in H_{f'}$ with $f\neq f'$, and any $w\in W$, $\overline{f}\overline{\succ}_w\overline{f}'$ if $f\succ_wf'$. For any $\overline{f}\in H_f$ and any $w\in W$ $\overline{f}\overline{\succ}_w{\o}$ if $f\succ_w{\o}$. For any $\overline{f},\overline{f}'\in H_f$ and any $w\in W$, $\overline{f}\overline{\succ}_w\overline{f}'$ if $A_{\overline{f}}\succ_fA_{\overline{f}'}$. That is, we decompose each firm $f\in F$ into $|\mathcal{A}_f|$ firms at its original location in each worker's preference list, and at each location the $|\mathcal{A}_f|$ firms are in an order the same as the order of their acceptable sets in $f$'s preference list. See Example \ref{exam_decomp} for an illustration. We have the following lemma.

\begin{lemma}\label{lma_decomp1}
\normalfont
If $\overline{\mu}$ is a stable matching in $\overline{\Gamma}$, then $\mu$ is a stable matching in $\Gamma$ where $\mu(f)=\cup _{\overline{f}\in H_f}\overline{\mu}(\overline{f})$ for each $f\in F$, $\mu(w)=f$ if $\overline{\mu}(w)\in H_f$ for each $w\in W$, and $\mu({\o})=\overline{\mu}({\o})$.
\end{lemma}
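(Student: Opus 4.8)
The plan is to transfer each property of $\mu$ from the corresponding property of $\overline{\mu}$, the whole argument hinging on the observation illustrated in Example~\ref{exam_decomp}: in any stable matching of $\overline{\Gamma}$, at most one firm among those decomposed from a common firm $f$ is matched with a nonempty set. I would prove this first, and I expect it to be the crux. Suppose $\overline{f},\overline{f}'\in H_f$ are both matched with nonempty sets under $\overline{\mu}$. Individual rationality of $\overline{\mu}$, together with the fact that each of these firms has a single acceptable set, forces $\overline{\mu}(\overline{f})=A_{\overline{f}}$ and $\overline{\mu}(\overline{f}')=A_{\overline{f}'}$, and since a worker is matched with at most one firm these two sets are disjoint. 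By additivity (a fortiori by complementarity), $A_{\overline{f}}\cup A_{\overline{f}'}\in\mathcal{A}_f$, so some $\overline{f}''\in H_f$ has $A_{\overline{f}''}=A_{\overline{f}}\cup A_{\overline{f}'}$; every worker of this set is matched with $\overline{f}$ or $\overline{f}'$, hence $\overline{\mu}(\overline{f}'')=\emptyset$. Because $Ch_f(A_{\overline{f}''})=A_{\overline{f}''}$ is $\succ_f$-optimal among subsets of $A_{\overline{f}''}$ and is distinct from $A_{\overline{f}}$ and $A_{\overline{f}'}$, we get $A_{\overline{f}''}\succ_f A_{\overline{f}}$ and $A_{\overline{f}''}\succ_f A_{\overline{f}'}$; so, by the way members of $H_f$ are ordered inside each worker's list, $\overline{f}''\overline{\succ}_w\overline{\mu}(w)$ for every $w\in A_{\overline{f}''}$, and $\overline{f}''$ with $A_{\overline{f}''}$ blocks $\overline{\mu}$ --- a contradiction. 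This step is exactly where additivity of $f$'s preference and the construction of $\overline{\succ}_W$ are both indispensable.

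Granting this claim, $\mu$ is a well-defined matching: for each $f$ the set $\mu(f)=\cup_{\overline{f}\in H_f}\overline{\mu}(\overline{f})$ is either empty or the set assigned to the unique matched member of $H_f$, and condition (iii) of the matching definition follows from the equivalence $\overline{\mu}(w)\in H_f \Leftrightarrow w\in\overline{\mu}(\overline{\mu}(w))\subseteq\mu(f)$ together with pairwise disjointness of the sets $\overline{\mu}(\overline{f})$, $\overline{f}\in H_f$. Individual rationality of $\mu$ is then immediate: if $\mu(w)=f$ then $\overline{\mu}(w)\in H_f$ and $\overline{\mu}(w)\overline{\succ}_w{\o}$, which by construction of $\overline{\succ}_w$ means $f\succ_w{\o}$; and $\mu(f)$ is either empty or an acceptable set of $f$, so $\mu(f)=Ch_f(\mu(f))$.

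For the absence of a blocking coalition I would argue by contradiction. If $f$ and $S$ block $\mu$, then replacing $S$ by $Ch_f(S)$ lets me assume $S\in\mathcal{A}_f$ and $S\neq\emptyset$ (if $Ch_f(S)=\emptyset$ then $\emptyset\succ_f\mu(f)$, contradicting individual rationality of $\mu$). Let $\overline{f}\in H_f$ be the firm with $A_{\overline{f}}=S$. If $\overline{\mu}(\overline{f})\neq\emptyset$, then $\overline{\mu}(\overline{f})=S$ and $\overline{f}$ is the unique matched member of $H_f$, so $\mu(f)=S$, contradicting $S\succ_f\mu(f)$; hence $\overline{\mu}(\overline{f})=\emptyset$ and $S\succ_{\overline{f}}\overline{\mu}(\overline{f})$. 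It remains to verify $\overline{f}\,\overline{\succeq}_w\overline{\mu}(w)$ for each $w\in S$, by cases on $\mu(w)$: when $\mu(w)={\o}$ or $\mu(w)=f'\in F\setminus\{f\}$ this follows from $f\succeq_w\mu(w)$ and the definition of $\overline{\succ}_w$ across distinct original firms; when $\mu(w)=f$, writing $\overline{\mu}(w)=\overline{f}'\in H_f$ one has $\overline{\mu}(\overline{f}')=A_{\overline{f}'}=\mu(f)$ (once more the unique matched member), so $S\succ_f\mu(f)=A_{\overline{f}'}$ translates into $\overline{f}\,\overline{\succ}_w\overline{f}'$. Thus $\overline{f}$ and $S$ block $\overline{\mu}$, contradicting its stability, so no such coalition exists and $\mu$ is stable. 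Beyond the pivotal claim, everything here is careful bookkeeping with the preference translation.
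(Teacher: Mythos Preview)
Your proposal is correct and follows essentially the same route as the paper's proof: first establish the key claim that at most one member of each $H_f$ is nonempty under $\overline{\mu}$ (via additivity and the ordering of $H_f$ inside workers' lists), then read off individual rationality, and finally transfer a putative block $(f,S)$ in $\Gamma$ to a block by the copy $\overline{f}\in H_f$ with $A_{\overline{f}}=Ch_f(S)$ in $\overline{\Gamma}$. Your case analysis for $\overline{f}\,\overline{\succeq}_w\overline{\mu}(w)$ is slightly more explicit than the paper's (and in particular handles the case $\mu(f)=\emptyset$ more transparently), but the substance is the same.
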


\begin{proof}
$\mu$ is a matching since $\mu(w)=f$ if and only if $w\in \mu(f)$ for all $w\in W$ and all $f\in \widetilde{F}$. The individual rationalities for workers in $\Gamma$ follow that $\overline{\mu}(w)\succ_w{\o}$ implies $\mu(w)\succ_w{\o}$ for each $w\in W$. Since for each $\overline{f}\in H_f$, $\overline{\mu}(\overline{f})$ in $\overline{\Gamma}$ is either an acceptable set for $f$ in $\Gamma$ or empty, we know that $\mu(f)$ is acceptable for each $f\in F$ due to firms' complementary preferences (or additive preferences defined in Section \ref{Sec_add}). Hence, $\mu$ is also individually rational for firms.

For each $f\in F$, among all firms from $H_f$, there exists only one $\overline{f}\in H_f$ such that $\overline{\mu}(\overline{f})\neq\emptyset$, and thus $\mu(f)=\overline{\mu}(\overline{f})$. Otherwise, suppose there exist $\overline{f},\overline{f}'\in H_f$ such that $\overline{\mu}(\overline{f})\neq\emptyset$ and $\overline{\mu}(\overline{f}')\neq\emptyset$. Since $\overline{\mu}(\overline{f})$ and $\overline{\mu}(\overline{f}')$ are both acceptable sets for $f$ in $\Gamma$ and $\overline{\mu}(\overline{f})\cap\overline{\mu}(\overline{f}')=\emptyset$, according to firms' complementary preferences (or additive preferences defined in Section \ref{Sec_add}), we know that there exists $\overline{f}''\in H_f$ that has a unique acceptable set $A_{\overline{f}''}=\overline{\mu}(f)\cup\overline{\mu}(f')$. $\overline{\mu}(\overline{f})\neq\emptyset$ and $\overline{\mu}(\overline{f}')\neq\emptyset$ imply $\overline{\mu}(\overline{f}'')=\emptyset$. According to our decomposition, all workers in $\overline{\Gamma}$ prefer $\overline{f}''$ to $\overline{f}$ and $\overline{f}'$, and thus $\overline{f}''$ and $\overline{\mu}(\overline{f})\cup\overline{\mu}(\overline{f}')$ block $\overline{\mu}$. A contradiction.

Suppose $f$ and $S\subseteq W$ block $\mu$ in $\Gamma$, then $Ch_f(S)\succeq_fS\succ_f\mu(f)$. Let $\overline{f}$ be the unique firm from $H_f$ such that $\overline{\mu}(\overline{f})\neq\emptyset$ and $\overline{\mu}(\overline{f})=\mu(f)$. Let $\overline{f}'$ be the firm from $H_f$ such that $A_{\overline{f}'}=Ch_f(S)$. We have $Ch_f(S)\overline{\succ}_{\overline{f}'}\overline{\mu}(\overline{f}')=\emptyset$ and $\overline{f}'\overline{\succ}_w\overline{f}$ for each $w\in W$. Since each $w\in S$ weakly prefers $f$ to $\mu(w)$ in $\Gamma$, according to our decomposition, each $w\in S$ in $\overline{\mu}$  is either matched with $\overline{f}$ or some firm that is less preferred than $\overline{f}$ by $w$. Since $\overline{f}'\overline{\succ}_w\overline{f}$ for each $w\in W$, we know that each $w\in S$ weakly prefers $\overline{f}'$ to $\overline{\mu}(w)$ in $\overline{\Gamma}$. Therefore, $\overline{f}'$ and $Ch_f(S)$ block $\overline{\mu}$ in $\overline{\Gamma}$. A contradiction.
\end{proof}

\subsection{Proof of Theorem \ref{thm_exist1}}\label{proof_thm1}

H22 used a result of \cite{HK56} on unimodular matrices. Our Theorem \ref{thm_exist1} follows from H22's method by instead applying a result of \cite{FHO74} on balanced matrices to the continuum market induced from $\overline{\Gamma}$. In the following, we illustrate how to use the result of \cite{FHO74} in H22's method to prove Theorem \ref{thm_exist1}. We do not give a complete proof in this section but only explain the changes we made in H22' method. We refer the reader to Section 3 and Section 5.2 of H22 and this section for a complete proof. However, in the following we introduce the method in an intuitive way to make the discussion self-contained.

We follow H22 to construct a continuum market $\widehat{\Gamma}$ from $\overline{\Gamma}$ by assuming that each worker is divisible and constructing a particular preference for each firm.\footnote{See Section 3.1 of H22 for the formal definitons of firms' preferences and stable (fractional/integral) matchings in the continuum market.} Each $w\in W$ is called a worker type in $\widehat{\Gamma}$. Workers of type-$w$ in $\widehat{\Gamma}$ have the same preference as $w$ in $\overline{\Gamma}$. In our problem, H22's construction yields a Leontief preference for each firm in $\widehat{\Gamma}$: Each firm in $\widehat{\Gamma}$ wants to hire equal amounts of each worker type from its unique acceptable set in $\overline{\Gamma}$. For example, in the continuum market $\widehat{\Gamma}$ induced from (\ref{newmarket}), $f_1^1$ chooses $\min\{x_1,x_2,x_3\}(1,1,1,0)$ from an available worker share $(x_1,x_2,x_3,x_4)$, where $x_i$ is the amount of type-$w_i$ workers available for $f_1^1$. The existence theorem of \cite{CKK19} implies that there always exists a stable matching in $\widehat{\Gamma}$, which may be fractional or integral. For instance, the continuum market $\widehat{\Gamma}$ induced from (\ref{newmarket}) has the following stable fractional matching $M$.

\begin{center}
\begin{tabular}
[c]{c|cccc}
& $w_1$ & $w_2$ & $w_3$ & $w_4$ \\\hline
$f^1_1$ & $0.5$ & $0.5$ & $0.5$ & $0$ \\
$f^2_1$ & $0.5$ & $0$ & $0$ & $0$ \\
$f^3_1$ & $0$ & $0$ & $0$ & $0$ \\
$f_2$ & $0$ & $0.5$ & $0.5$ & $0.5$ \\
${\o}$ & $0$ & $0$ & $0$ & $0.5$ %
\end{tabular}
\end{center}
This matching is stable in the following sense:\footnote{See Definition 5 of H22 for the formal definition of stability in $\widehat{\Gamma}$.} (a) Individual rationality: No worker is matched with an unacceptable firm for her type; no firm wants to unilaterally drop any of its employees. (b) No blocking coalition: $f_1^1$ wants to hire more workers of type-$w_1$, type-$w_2$, and type-$w_3$ at the ratio $1:1:1$, but $f_1^1$ is not attractive to the type-$w_3$ workers matched with $f_2$ since type-$w_3$ workers prefer $f_2$ to $f_1^1$. Similarly, none of $f_1^2$, $f^3_1$, or $f_2$ can draw workers from other firms to get better off.\footnote{$f_1^2$ wants to hire more type-$w_1$ workers, but $f_1^2$ is not attractive to the type-$w_1$ workers matched with $f_1^1$ since all workers prefer $f_1^1$ to $f_1^2$. $f_1^3$ wants to hire more workers of type-$w_2$ and type-$w_3$ at the ratio $1:1$, but $f_1^3$ is not attractive to the type-$w_3$ workers matched with $f_1^1$ and $f_2$ since type-$w_3$ workers prefer $f_1^1$ and $f_2$ to $f_1^3$. $f_2$ wants to hire more workers of type-$w_2$, type-$w_3$, and type-$w_4$ at the ratio $1:1:1$, but $f_2$ is not attractive to the type-$w_2$ workers matched with $f_1^1$ since type-$w_2$ workers prefer $f_1^1$ to $f_2$.}

\textbf{Stable transformations.} Consider to match $f_1^1$ with $(0.6,0.6,0.6,0)$ in the above matching. We find that the above (a) and (b) still hold for all firms. We call such an assignment a stable pseudo-matching, which may not assign quantity 1 of some worker types. The following operations on a stable pseudo-matching preserve stability.

1. Pick a firm $f$ from $F$, suppose a stable pseudo-matching matches $f$ with quantity $x\in(0,1)$ of each worker type from its unique acceptable set in $\overline{\Gamma}$. Then, either of the following transformations preserves stability: (i) match $f$ with $\mathbf{0}$; (ii) match $f$ with quantity $1$ of each worker type from its unique acceptable set in $\overline{\Gamma}$.

2. Pick a worker type $w$, suppose a stable pseudo-matching matches some positive amount of type-$w$ workers with the null firm ${\o}$. The following transformation preserves stability: (iii) match ${\o}$ with quantity $0$ or $1$ of type-$w$ workers, the amounts of other worker types matched with ${\o}$ remain unchanged.\footnote{Type-(i), type-(ii), and type-(iii) transformations correspond to the type-1, type-2, and type-3 stable transformations of H22, respectively.}

For example, consider the following transformations on the above stable fractional matching.
\medskip

$M=$
\begin{tikzpicture}[baseline = (M.west)]
    \tikzset{brace/.style = {decorate, decoration = {brace, amplitude = 5pt}, thick}}
    \matrix(M)
    [
        matrix of math nodes,
        left delimiter = (,
        right delimiter = )
    ]
    {
0.5 & 0.5 & 0.5 & 0 \\
0.5 & 0 & 0 & 0 \\
0 & 0 & 0 & 0 \\
0 & 0.5 & 0.5 & 0.5 \\
0 & 0 & 0 & 0.5\\
    };
\end{tikzpicture}
$\stackrel{(1)}{\Longrightarrow}$
\begin{tikzpicture}[baseline = (M.west)]
    \tikzset{brace/.style = {decorate, decoration = {brace, amplitude = 5pt}, thick}}
    \matrix(M)
    [
        matrix of math nodes,
        left delimiter = (,
        right delimiter = )
    ]
    {
1 & 1 & 1 & 0 \\
0.5 & 0 & 0 & 0 \\
0 & 0 & 0 & 0 \\
0 & 0.5 & 0.5 & 0.5 \\
0 & 0 & 0 & 0.5\\
    };
\end{tikzpicture}
$\stackrel{(2)}{\Longrightarrow}$
\begin{tikzpicture}[baseline = (M.west)]
    \tikzset{brace/.style = {decorate, decoration = {brace, amplitude = 5pt}, thick}}
    \matrix(M)
    [
        matrix of math nodes,
        left delimiter = (,
        right delimiter = )
    ]
    {
1 & 1 & 1 & 0 \\
0 & 0 & 0 & 0 \\
0 & 0 & 0 & 0 \\
0 & 0.5 & 0.5 & 0.5 \\
0 & 0 & 0 & 0.5\\
    };
\end{tikzpicture}

$\stackrel{(3)}{\Longrightarrow}$
\begin{tikzpicture}[baseline = (M.west)]
    \tikzset{brace/.style = {decorate, decoration = {brace, amplitude = 5pt}, thick}}
    \matrix(M)
    [
        matrix of math nodes,
        left delimiter = (,
        right delimiter = )
    ]
    {
1 & 1 & 1 & 0 \\
0 & 0 & 0 & 0 \\
0 & 0 & 0 & 0 \\
0 & 0 & 0 & 0 \\
0 & 0 & 0 & 0.5\\
    };
\end{tikzpicture}
$\stackrel{(4)}{\Longrightarrow}$
\begin{tikzpicture}[baseline = (M.west)]
    \tikzset{brace/.style = {decorate, decoration = {brace, amplitude = 5pt}, thick}}
    \matrix(M)
    [
        matrix of math nodes,
        left delimiter = (,
        right delimiter = )
    ]
    {
1 & 1 & 1 & 0 \\
0 & 0 & 0 & 0 \\
0 & 0 & 0 & 0 \\
0 & 0 & 0 & 0 \\
0 & 0 & 0 & 1\\
    };
\end{tikzpicture}$=M'$

\medskip
Transformation (1) is a type-(ii) stable transformation. This transformation preserves stability because other firms can not draw workers from $f_1^1$ no matter when $f_1^1$ is matched with $(0.5,0.5,0.5,0)$ or $(1,1,1,0)$. Transformation (2) is a type-(i) stable transformation. Since $f_1^2$ cannot draw type-$w_1$ workers from other firms when matched with $(0.5,0,0,0)$, $f_1^2$ cannot do such things either when matched with $(0,0,0,0)$. Transformation (3) is also a type-(i) stable transformation. Since $f_2$ cannot draw workers of type-$w_2$, type-$w_3$, and type-$w_4$ simultaneously from other firms when matched with $(0,0.5,0.5,0.5)$, $f_2$ cannot do such things either when matched with $(0,0,0,0)$. Transformation (4) is a type-(iii) stable transformation. This transformation preserves stability because changes of the quantities for unmatched workers do not affect stability.\footnote{Note that we can make such changes on a worker type only when a positive amount of this worker type has been matched with ${\o}$.} This process produces a stable integral matching $M'$.

According to H22, every stable integral matching of $\widehat{\Gamma}$ corresponds to a stable matching of $\overline{\Gamma}$.\footnote{See Lemma 2 of H22.} For example, the above stable integral matching $M'$ is also a stable matching in $\overline{\Gamma}$. Then, according to Lemma \ref{lma_decomp1}, the existence of a stable integral matching in $\widehat{\Gamma}$ implies the existence of a stable matching in the original market $\Gamma$. Therefore, we turn to study when stable integral matchings exist in $\widehat{\Gamma}$. H22 showed the following observation: We can always obtain a stable integral matching through stable transformations on stable fractional matching $M$ when there is a nonnegative integral solution to the following system of linear equations.
\medskip

\begin{tikzpicture}[baseline = (M.west)]
    \tikzset{brace/.style = {decorate, decoration = {brace, amplitude = 5pt}, thick}}
    \matrix(M)
    [
        matrix of math nodes,
        left delimiter = (,
        right delimiter = )
    ]
    {
        1 & 1 & 0 & 0 & 0 & 0 & 0\\
        0 & 0 & 1 & 1 & 0 & 0 & 0\\
        0 & 0 & 0 & 0 & 1 & 1 & 0\\
        1 & 0 & 1 & 0 & 0 & 0 & 0\\
        1 & 0 & 0 & 0 & 1 & 0 & 0\\
        1 & 0 & 0 & 0 & 1 & 0 & 0\\
        0 & 0 & 0 & 0 & 1 & 0 & 1\\
            };
     \draw[decorate,decoration={brace,mirror,amplitude=3mm},thick]
        ($(M-4-1.north east) + (-1, 0)$)
        -- node[left = 18pt]{$B^*$}
        ($(M-7-1.south east) + (-1, 0)$);
\end{tikzpicture}
\begin{tikzpicture}[baseline = (M.west)]
    \tikzset{brace/.style = {decorate, decoration = {brace, amplitude = 5pt}, thick}}
        \matrix(M)
    [
        matrix of math nodes,
        left delimiter = (,
        right delimiter = )
    ]
    {
       z_1\\
z_2\\
z_3\\
z_4\\
z_5\\
z_6\\
z_7\\
    };
\end{tikzpicture}$=$
\begin{tikzpicture}[baseline = (M.west)]
    \tikzset{brace/.style = {decorate, decoration = {brace, amplitude = 5pt}, thick}}
        \matrix(M)
    [
        matrix of math nodes,
        left delimiter = (,
        right delimiter = )
    ]
    {
1\\
1\\
1\\
1\\
1\\
1\\
1\\
    };
\end{tikzpicture}

\medskip
Let $B$ denote the $7\times 7$ matrix on the left. The first to third rows of $B$ are constraints for preserving stability; the fourth to seventh rows of $B$ are constraints for assigning quantity 1 of each worker type. Let $B^*$ denote the $4\times 7$ submatrix that includes the fourth to seventh rows of $B$. Let $B_i$ and $B^*_i$ be the $i$-th columns of $B$ and $B^*$, respectively. $B^*_1$ and $B^*_2$ correspond to $\{w_1,w_2,w_3\}$ and $\emptyset$ in $f^1_1$'s preference list, respectively. $B^*_3$ and $B^*_4$ correspond to $\{w_1\}$ and $\emptyset$ in $f^2_1$'s preference list, respectively. $B^*_5$ and $B^*_6$ correspond to $\{w_2,w_3,w_4\}$ and $\emptyset$ in $f_2$'s preference list, respectively. $B^*_7$ corresponds to the type-$w_4$ workers matched with firm ${\o}$. $\mathbf{z}=(0.5,0.5,0.5,0.5,0.5,0.5,0.5)$ is a solution to this system, which refers to matching $M$ in the continuum market as follows.

\begin{equation}\label{z-matching}
M=\left(
\begin{aligned}    f^1_1  \qquad\qquad\qquad & \qquad\qquad\qquad f^2_1 \qquad \\
             z_1(1,1,1,0)+z_2(0,0,0,0) \quad & \quad z_3(1,0,0,0)+z_4(0,0,0,0)
    \end{aligned}
    \right.
\end{equation}

\begin{equation*}
      \left.  \begin{aligned}     f^3_1 \qquad\qquad\quad  & \qquad f_2 \qquad\quad  & \quad {\o} \qquad \\
             (0,0,0,0)\qquad\qquad & z_5(0,1,1,1)+z_6(0,0,0,0)\quad & \quad z_7(0,0,0,1)  \end{aligned}
             \right)
\end{equation*}

The above formula with any integral $\mathbf{z}\in \{0,1\}^7$ that satisfies $z_1+z_2=1$, $z_3+z_4=1$, and $z_5+z_6=1$ (guaranteed by the first to third rows of $B$) can be obtained from $M$ via the stable transformations. For example, transformation (1) means that we let $z_1=1$ and $z_2=0$.\footnote{Similarly, transformation (2) means that we let $z_3=0$ and $z_4=1$; transformation (3) means that we let $z_5=0$ and $z_6=1$; transformation (4) means that we let $z_7=1$.} Letting $z_1=0$ and $z_2=1$ corresponds to a type-(i) transformation and thus also preserves stability. Then, any $\mathbf{z}\in \{0,1\}^7$ that satisfies $B\mathbf{z}=\mathbf{1}$ corresponds to a stable integral matching since $B^*\mathbf{z}=\mathbf{1}$ requires that the workers of each type assigned to firms is of quantity 1. For instance, $\mathbf{z}'=(1,0,0,1,0,1,1)$ is a solution to the system, and we obtain stable integral matching $M'$ by plugging $\mathbf{z}'$ into (\ref{z-matching}). Hence, we can always obtain a stable integral matching via stable transformations if there is an integral point on the polytope $\{\mathbf{z}\mid B\mathbf{z}=\mathbf{1}, \mathbf{z}\geq0\}$.

Since we construct the polytope $\{\mathbf{z}\mid B\mathbf{z}=\mathbf{1}, \mathbf{z}\geq0\}$ from $M$ that corresponds to a point in the polytope, the polytope is nonempty. According to \cite{FHO74}, if $B$ is balanced, all vertices of the polytope $\{\mathbf{z}\mid B\mathbf{z}=\mathbf{1}, \mathbf{z}\geq0\}$ are integral (see also Theorem 21.7 of \citealp{S86}).  Hence, we know that balancedness of $B$ implies the existence of an integral vertex on this polytope.

To check whether $B$ is balanced, the columns for firm ${\o}$ (i.e., $B_7$ in this example) can be removed since each of these columns contain only one 1. The columns for $\emptyset$ (i.e., $B_2$, $B_4$, and $B_6$ in this example) can also be removed for the same reason. We then notice that the first to third rows of the remaining matrix can also be removed since each of these rows contains only one 1. Therefore, we find that it suffices to check the balancedness of the matrix formed by the remaining columns that are firms' acceptable sets (i.e., $B^*_1$, $B^*_3$, and $B^*_5$ in this example).

Recall that the existence theorem of \cite{CKK19} implies the existence of a stable matching in market $\widehat{\Gamma}$, which may be fractional or integral. Pick such a stable matching $M''$. If $M''$ is integral, we are done. If $M''$ is fractional, the above discussion shows that we can always obtain a stable integral matching from $M''$ via the stable transformations when firms' acceptable sets form a balanced matrix. Therefore, balancedness of firms' acceptable sets implies the existence of a stable integral matching in $\widehat{\Gamma}$, which further implies the existence of a stable matching in $\overline{\Gamma}$ and $\Gamma$.

\subsection{Firm decomposition II}\label{Sec_decomp2}

Let $\mathcal{K}_f$ be the set of components of $G_f$. Given a matching market $\Gamma=(W,F,\succ_W,\succ_F)$ where firms have complementary preferences, we construct a new market $\check{\Gamma}=(W,\check{F},\check{\succ}_{W},\check{\succ}_{\check{F}})$ where each firm $f\in F$ in $\Gamma$ is decomposed into a set $Q_f$ of $|\mathcal{K}_f|$ firms: $Q_f=\{f^1,\ldots,f^{|\mathcal{K}_f|}\}$. Each $\check{f}\in Q_f$ corresponds to a component $K_{\check{f}}\in \mathcal{K}_f$. Let $W^{\check{f}}$ be the set of workers that is the vertex set of $K_{\check{f}}$. Each $\check{f}\in Q_f$ has a preference $\check{\succ}_{\check{f}}$ such that $Ch_{\check{f}}(S)=Ch_{f}(S)\cap W^{\check{f}}$ for all $S\subseteq W$. It is easy to see that each $\check{f}\in\check{F}$ has a complementary preference. Each firm is also decomposed at its original position in each worker's preference list: For any $\check{f},\check{f}'\in \check{F}$ with $\check{f}\in Q_f$, $\check{f}'\in Q_{f'}$, and any $w\in W$, $\check{f}\check{\succ}_w\check{f}'$ if $f\succ_wf'$. For any $\check{f}\in \check{F}$ with $\check{f}\in Q_f$, and any $w\in W$, $\check{f}\check{\succ}_w{\o}$ if $f\succ_w{\o}$. For any $f\in F$, workers' preference orders over firms from $Q_f$ can be arbitrary.

The following lemma shows that every stable matching in $\check{\Gamma}$ corresponds to a stable matching in the original market $\Gamma$. Theorem \ref{thm_exist3} follows immediately from Theorem \ref{thm_exist1} and this lemma since acceptable sets for any $\check{f}\in Q_f$ are primitive acceptable sets for $f$.\footnote{If $\check{f}\in Q_f$ and $Ch_{\check{f}}(S)=S$, since $Ch_{\check{f}}(S)=Ch_{f}(S)\cap W^{\check{f}}$, we have $S\subseteq W^{\check{f}}$ and $Ch_{f}(S)=S$. Thus, we know that each acceptable set of $\check{f}\in Q_f$ is a primitive acceptable set of $f$.}

\begin{lemma}
\normalfont
If $\check{\mu}$ is a stable matching in $\check{\Gamma}$, then $\mu$ is a stable matching in $\Gamma$ where $\mu(f)=\cup _{\check{f}\in Q_f}\check{\mu}(\check{f})$ for each $f\in F$, $\mu(w)=f$ if $\check{\mu}(w)\in Q_f$ for each $w\in W$, and $\mu({\o})=\check{\mu}({\o})$.
\end{lemma}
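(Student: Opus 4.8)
The plan is to run the argument of Lemma~\ref{lma_decomp1}, but with one structural input replaced: in $\check{\Gamma}$ several firms of $Q_f$ can be matched simultaneously (they are independent subsidiaries), so the ``only one copy is matched'' step is replaced by the fact that $Ch_f$ factors over the components of $G_f$. Write $\mathcal{K}_f=\{K_1,\dots,K_N\}$ for the components of $G_f$, with vertex sets $W^1,\dots,W^N$, and let $\check{f}_i\in Q_f$ be the firm attached to $K_i$, so $W^{\check{f}_i}=W^i$. The first and main step is to establish, for every $T\subseteq W$,
\[
Ch_f(T)=\bigcup_{i=1}^{N}Ch_f(T\cap W^i),\qquad\text{equivalently}\qquad Ch_f(T)\cap W^i=Ch_f(T\cap W^i).
\]
Here ``$\supseteq$'' is immediate from $f$'s complementary preference since $T\cap W^i\subseteq T$. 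For ``$\subseteq$'', take $w\in Ch_f(T)$; since $Ch_f(T)$ is an acceptable set, $w\in W_f$, say $w\in W^i$, and I delete the workers of $T$ outside $W^i$ one at a time, keeping $w$ in the choice throughout. If a deleted worker $u$ lies in $W_f$, it belongs to a component different from $K_i$, hence $(w,u)\notin E$; were $w$ to drop out of the choice when $u$ is deleted, clause~(i) of the definition of $E$ would give $(w,u)\in E$, a contradiction. If $u\notin W_f$, then $u$ belongs to no acceptable set, so $u\notin Ch_f(T')$ for every $T'$; hence $Ch_f(T)\subseteq T\setminus\{u\}$, so $Ch_f(T)$ is also $\succ_f$-maximal among subsets of $T\setminus\{u\}$ and the choice is unchanged. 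Iterating gives $w\in Ch_f(T\cap W^i)$. Since $Ch_{\check{f}_i}(S)=Ch_f(S)\cap W^{\check{f}_i}$, this yields $Ch_{\check{f}_i}(T)=Ch_f(T\cap W^i)$, so $\check{f}_i$'s choice depends on $T$ only through $T\cap W^i$ and coincides with $f$'s choice on subsets of $W^i$.

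Next, $\mu$ is a matching directly from the definitions and the fact that $\check{\mu}$ is one. Individual rationality of $\check{\mu}$ for workers transports to $\mu$, because workers' preferences in $\check{\Gamma}$ keep the position of ${\o}$ relative to each block $Q_g$. For firms, since $\check{\mu}(\check{f}_j)\subseteq W^{\check{f}_j}=W^j$ and the $W^j$ are pairwise disjoint, one has $\mu(f)\cap W^i=\check{\mu}(\check{f}_i)$ for each $i$; hence, by the factorization identity and individual rationality of $\check{\mu}$,
\[
Ch_f(\mu(f))=\bigcup_i Ch_f(\check{\mu}(\check{f}_i))=\bigcup_i Ch_{\check{f}_i}(\check{\mu}(\check{f}_i))=\bigcup_i\check{\mu}(\check{f}_i)=\mu(f),
\]
so $\mu$ is individually rational. (This last computation is in fact subsumed by the no-blocking condition established below.)

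For the no-blocking-coalition condition I would use the standard reformulation: letting $B(f):=\{w\in W:f\succeq_w\mu(w)\}$, the firm $f$ can block $\mu$ if and only if $Ch_f(B(f))\neq\mu(f)$ (since $\mu(f)\subseteq B(f)$ gives $Ch_f(B(f))\succeq_f\mu(f)$ always), and likewise $\check{f}$ can block $\check{\mu}$ in $\check{\Gamma}$ if and only if $Ch_{\check{f}}(\check{B}(\check{f}))\neq\check{\mu}(\check{f})$, where $\check{B}(\check{f}):=\{w:\check{f}\,\check{\succeq}_w\,\check{\mu}(w)\}$; so stability of $\check{\mu}$ gives $Ch_{\check{f}}(\check{B}(\check{f}))=\check{\mu}(\check{f})$ for all $\check{f}$. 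Fix $f\in F$ and a component index $i$, and check the identity $B(f)\cap W^i=\check{B}(\check{f}_i)\cap W^i$ by cases on $\mu(w)$ for $w\in W^i$: if $\mu(w)={\o}$ or $\mu(w)\in F\setminus\{f\}$, it follows because $\check{\succ}_w$ preserves the order between distinct blocks $Q_g$ and the position of ${\o}$; if $\mu(w)=f$, then $w\in\mu(f)$ together with disjointness of the $W^j$ forces $w\in\check{\mu}(\check{f}_i)$, i.e.\ $\check{\mu}(w)=\check{f}_i$, and both sides hold trivially. Since $Ch_{\check{f}_i}$ sees $B(f)$ only through its trace on $W^i$, the factorization identity then yields
\[
Ch_f(B(f))=\bigcup_i Ch_f(B(f)\cap W^i)=\bigcup_i Ch_{\check{f}_i}(\check{B}(\check{f}_i))=\bigcup_i\check{\mu}(\check{f}_i)=\mu(f),
\]
so no firm blocks $\mu$; together with individual rationality this makes $\mu$ stable.

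I expect the factorization identity $Ch_f(T)=\bigcup_i Ch_f(T\cap W^i)$ to be the one genuinely delicate point — in particular getting both cases of the deletion argument (a deleted worker inside $W_f$ versus outside it) exactly right, and confirming that it uses only the definition of the edge set $E$ together with $f$'s complementary preference. Everything afterward, including the case analysis for $B(f)\cap W^i$, is bookkeeping that runs parallel to the proof of Lemma~\ref{lma_decomp1}.
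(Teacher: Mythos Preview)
Your proof is correct and rests on the same structural insight as the paper's: the choice function $Ch_f$ factors over the components of $G_f$, so each subsidiary $\check f_i$ behaves as $f$ restricted to $W^i$. The paper invokes this factorization only implicitly (the line ``$f$'s choice over $W^{f^*}$ is independent of other workers outside this set''), whereas you prove it carefully via the one-at-a-time deletion argument, distinguishing deleted workers inside and outside $W_f$; this is a genuine improvement in rigor.

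Where you diverge is in the no-blocking step. The paper argues by contradiction: given a blocking pair $(f,S)$ in $\Gamma$, it picks a worker $w^*\in Ch_f(S\cup\mu(f))\setminus\mu(f)$, locates its component $f^*$, and shows $(f^*,Ch_f(S\cup\mu(f))\cap W^{f^*})$ blocks $\check\mu$. You instead use the standard reformulation ``$f$ blocks iff $Ch_f(B(f))\neq\mu(f)$'' and directly compute $Ch_f(B(f))=\mu(f)$ by combining the factorization with the identity $B(f)\cap W^i=\check B(\check f_i)\cap W^i$. Your route is cleaner and avoids choosing a distinguished $w^*$; the paper's route is slightly more hands-on but requires no auxiliary characterization of blocking. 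Both are short once the factorization is in place; the substance is the same.
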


\begin{proof}
$\mu$ is a matching since $\mu(w)=f$ if and only if $w\in \mu(f)$. The individual rationalities for workers in $\Gamma$ follow that $\check{\mu}(w)\succ_w{\o}$ implies $\mu(w)\succ_w{\o}$ for each $w\in W$. For each $f\in F$ and each $\check{f}\in Q_f$, since $\check{\mu}(\check{f})=Ch_{\check{f}}(\check{\mu}(\check{f}))=Ch_f(\check{\mu}(\check{f}))\cap W^{\check{f}}=Ch_f(\check{\mu}(\check{f}))$ and $f$ has a complementary preference, we know that $Ch_f(\cup _{\check{f}\in Q_f}\check{\mu}(\check{f}))=\cup _{\check{f}\in Q_f}\check{\mu}(\check{f})$. Hence, $\mu$ is individually rational for firms.

Suppose $f$ and $S\subseteq W$ block $\mu$ in $\Gamma$, then $Ch_f(S\cup\mu(f))\neq\mu(f)$. Since $\mu$ is individually rational for firms, there is at least a worker $w^*$ such that $w^*\in Ch_f(S\cup\mu(f))$ and $w^*\notin \mu(f)$. Let $f^*\in \check{F}$ be the firm such that $f^*\in Q_f$ and $w^*\in W^{f^*}$. We then prove that $f^*$ and $S'=Ch_f(S\cup\mu(f))\cap W^{f^*}$ form a blocking coalition in $\check{\mu}$.

Since $S$ and $f$ block $\mu$, $f\succeq_w\mu(w)$ for each $w\in S\cup\mu(f)$. Then, since $S'\subseteq S\cup\mu(f)$, we know that $f^*\check{\succeq}_w\check{\mu}(w)$ for each $w\in S'$. On the other hand, we have

\begin{equation}\label{Set}
\begin{aligned}
S'&=Ch_f((S\cap W^{f^*})\cup(\mu(f)\cap W^{f^*}))\\
&=Ch_f((S\cap W^{f^*})\cup\check{\mu}(f^*))
\end{aligned}
\end{equation}

The first equality is because $f$'s choice over $W^{f^*}$ is independent of other workers outside this set. We then have
\begin{align*}
Ch_{f^*}(S'\cup\check{\mu}(f^*))&=Ch_{f}(S'\cup\check{\mu}(f^*))\cap W^{f^*}\\
&=Ch_{f}(S'\cup\check{\mu}(f^*))\\
&=S'
\end{align*}
The first equality is due to the definition of $Ch_{f^*}$.
The second equality is because $S'\subseteq W^{f^*}$ and $\check{\mu}(f^*)\subseteq W^{f^*}$. The third equality is due to (\ref{Set}) and $S'\subseteq(S'\cup\check{\mu}(f^*))\subseteq((S\cap W^{f^*})\cup\check{\mu}(f^*))$.\footnote{(\ref{Set}) implies $S'\subseteq((S\cap W^{f^*})\cup\check{\mu}(f^*))$, and thus $(S'\cup\check{\mu}(f^*))\subseteq((S\cap W^{f^*})\cup\check{\mu}(f^*))$.}

$w^*\notin \mu(f)$ implies $w^*\notin \check{\mu}(f^*)$, then since $w^*\in S'$, we have $S'\neq \check{\mu}(f^*)$. Thus, $S'\check{\succ}_{f^*}\check{\mu}(f^*)$.
\end{proof}

\subsection{Proof of Theorem \ref{thm_neighbour}}\label{proof_thm4}

Suppose there exists a cycle of length $k\geq3$
\begin{equation*}
(w^1, S^1, w^2, S^2, ..., w^k, S^k, w^1)
\end{equation*}
where $w^i,w^{i+1}\in S^i (i\in\{1,2,\ldots,k-1\})$, $w^k,w^1\in S^k$, and no set of the cycle contains more than two workers of the cycle. We show the nonexistence of such a cycle by reaching a contradiction. We do not require $k$ to be odd, and thus we are proving that elements from $\{W^v|v\in V\}$ form a \emph{totally balanced} matrix. A 0-1 matrix is totally balanced if it does not contain as a submatrix the incidence matrix of any cycle of length at least 3.

Let $\lhd$ be the partial order over the vertices such that $v\lhd v'$ if the unique path from the root to $v'$ passes through $v$. Let $\widetilde{v}^1,\widetilde{v}^2,\ldots,\widetilde{v}^k$ be the vertices of the technology tree such that $w^i$ engages in a neighbour of upgrades from $E_{\widetilde{v}^i\rightarrow}$ for each $i\in\{1,\ldots,k\}$.

$w^1,w^2\in S^1$ implies that one of the following three cases must hold: (i) $\widetilde{v}^1\lhd\widetilde{v}^2$, (ii) $\widetilde{v}^2\lhd\widetilde{v}^1$, or (iii) $\widetilde{v}^1=\widetilde{v}^2$. Then, $w^1\notin S^2$ and $w^2\in S^2$ imply that (i) does not hold; $w^1\in S^k$ and $w^2\notin S^k$ imply that (ii) does not hold. We have $\widetilde{v}^1=\widetilde{v}^2$. Similar arguments hold for any pair $\widetilde{v}^i,\widetilde{v}^{i+1} (i\in\{1,2,\ldots,k-1\})$, and $\widetilde{v}^k,\widetilde{v}^1$. We have $\widetilde{v}^1=\widetilde{v}^2=\ldots=\widetilde{v}^k$. Let $\widetilde{v}=\widetilde{v}^1=\widetilde{v}^2=\ldots=\widetilde{v}^k$.

We then know that there exist $e^1,\ldots,e^k\in E_{\widetilde{v}\rightarrow}$ such that among these upgrades $w^1$ engages in only $e^k$ and $e^1$, and $w^i$ engages in only $e^{i-1}$ and $e^i$ for each $i\in\{2,\ldots,k\}$. Since $w^1$ engages in only a neighbour of upgrades, we know that either (ia) $e^k>_{\widetilde{v}}e^2$, $e^1>_{\widetilde{v}}e^2$ or (ib) $e^2>_{\widetilde{v}}e^k$, $e^2>_{\widetilde{v}}e^1$.

Suppose (ia) holds. Since $w^2$ engages in only a neighbour of upgrades, we know that either (iia) $e^1>_{\widetilde{v}}e^3$, $e^2>_{\widetilde{v}}e^3$ or (iib) $e^3>_{\widetilde{v}}e^1$, $e^3>_{\widetilde{v}}e^2$. If (iib) holds, then $e^1>_{\widetilde{v}}e^2$ and $e^3>_{\widetilde{v}}e^1$ contradict that $w_3$ engages in only a neighbour of upgrades. Thus, (iia) holds. We proceed with the following inductive argument. Suppose $e^s>_{\widetilde{v}}e^{s+2}$ and $e^{s+1}>_{\widetilde{v}}e^{s+2}$ for all $s\in\{1,\ldots,t\} (t\leq k-3)$. Since $w^{t+2}$ engages in only a neighbour of upgrades, we know that either (iiia) $e^{t+1}>_{\widetilde{v}}e^{t+3}$, $e^{t+2}>_{\widetilde{v}}e^{t+3}$ or (iiib) $e^{t+3}>_{\widetilde{v}}e^{t+1}$, $e^{t+3}>_{\widetilde{v}}e^{t+2}$. If (iiib) holds, then $e^{t+1}>_{\widetilde{v}}e^{t+2}$ and $e^{t+3}>_{\widetilde{v}}e^{t+1}$ contradict that $w^{t+3}$ engages in only a neighbour of upgrades. Thus, (iiia) holds. This inductive argument indicates that $e^s>_{\widetilde{v}}e^{s+2}$ and $e^{s+1}>_{\widetilde{v}}e^{s+2}$ for all $s\in\{1,\ldots,k-2\}$, and thus $e^2>_{\widetilde{v}}\cdots>_{\widetilde{v}}e^k$. It contradicts $e^k>_{\widetilde{v}}e^2$ in (ia).

Suppose (ib) holds, we can reach a contradiction with a similar argument as the above paragraph.
\bigskip


\begin{thebibliography}{99999999999999999999999999999999999999999}

\bibitem[Adachi(2000)]{A00}{\small Adachi, H., 2000. On a characterization of stable matchings. \textit{Economic Letters}, 68, 43-49.}

\bibitem[\'{A}goston et al.(2016)]{ABM16}{\small \'{A}goston, K.C., Bir\'{o}, P., McBride, I., 2016. Integer programming methods for special college admissions problems. \textit{Journal of Combinatorial Optimization}, 32, 1371-1399. }

\bibitem[Ashlagi et al.(2014)]{ABH14}{\small Ashlagi, I., Braverman, M., Hassidim, A., 2014. Stability in large matching markets with complementarities. \emph{Operations Research}, 62, 713-732.}

\bibitem[Azevedo et al.(2013)]{AWW13}{\small Azevedo, E.M., Weyl, E.G., White, A., 2013. Walrasian equilibrium in large, quasi-linear markets. \emph{Theoretical Economics}, 8(2), 281-290. }

\bibitem[Azevedo and Hatfield(2018)]{AH18}{\small Azevedo, E.M., Hatfield, J.M., 2018. Existence of equilibrium in large matching markets with complementarities. \emph{working paper}.}

\bibitem[Ba\"{\i}ou and Balinski(2000)]{BB00}{\small Ba\"{\i}ou, M., Balinski, M.L., 2019. The stable admissions polytope. \emph{Mathematcial Programming}, 87, 427-439.}

\bibitem[Baldwin and Klemperer(2019)]{BK19}{\small Baldwin, E., Klemperer, P., 2019. Understanding Preferences: ``Demand Types'', and the Existence of Equilibrium with Indivisibilities. \emph{Econometrica}, 87, 867-932.}

\bibitem[Berge(1970)]{B70}{\small Berge, C., 1970. Sur certains hypergraphes g\'{e}n\'{e}ralisant les graphes bipartites, in: \emph{Combinatorial Theory and its Appliations I} (Proceedings Colloquium on Combinatorial Theory and its Applications, Balatonf\"{u}red, Hungary, 1969; P. Erd\"{o}s, A. R\'{e}nyi, and V. S\'{o}s, Eds), North-Holland, Amsterdam, 1970, 119-133.}

\bibitem[Berge(1989)]{B89}{\small Berge, C., 1989. Hypergraphs, vol. 45. North-Holland Mathematical Library, North-Holland.}

\bibitem[Bir\'{o} et al.(2014)]{BMM14}{\small Bir\'{o}, P., Manlove, D.F., McBride, I., 2014. The hospitals/residents problem with couples: complexity and integer programming models. In: Proceedings of SEA 2014: the 13th international symposium on experimental algorithms. LNCS, vol 8504. Springer, New York.}

\bibitem[Che et al.(2019)]{CKK19}{\small Che, Y-K., Kim, J., Kojima, F., 2019. Stable matching in large economies. \textit{Econometrica}, 87(1), 65-110.}

\bibitem[Conforti et al.(1999)]{CCR99}{\small Conforti, M., Cornu\'{e}jols, G., Rao, M.R., 2019. Decomposition of balanced matrices. \textit{Journal of Combinatorial Theory (B)}, 77, 292-406.}

\bibitem[Danilov et al.(2001)]{DKM01}{\small Danilov, V., Koshevoy, G., Murota, K., 2001. Discrete convexity and equilibria in economics with indivisible goods and money. \emph{Mathematical Social Sciences}, 41, 251-273.}

\bibitem[Echenique(2012)]{E12}{\small Echenique, F., 2012. Contracts vs. salaries in matching. \textit{American Economic Review}, 102, 594-601.}

\bibitem[Echenique and Oviedo(2004)]{EO04}{\small Echenique, F., Oviedo, J., 2004. Core many-to-one matchings by fixed point methods. \textit{Journal of Economic Theory}, 115, 358-376.}

\bibitem[Echenique and Oviedo(2006)]{EO06}{\small Echenique, F., Oviedo, J., 2006. A theory of stability in many-to-many matching. \textit{Theoretical Economics}, 1, 233-273.}

\bibitem[Echenique and Yenmez(2007)]{EY07}{\small Echenique, F., Yenmez, B., 2007. A Solution to Matching With Preferences Over Colleagues. \textit{Games and Economic Behavior}, 59, 46-71.}

\bibitem[Fleiner(2003)]{F03}{\small Fleiner, T., 2003. A fixed-point approach to stable matchings and some applications. \textit{Mathematics of Operations Research}, 28, 103-126.}

\bibitem[Fulkerson et al.(1974)]{FHO74}{\small Fulkerson, D.R., Hoffman, A.J., Oppenheim, R., 1974. On balanced matrices. \textit{Mathematical Programming Study}, 1, 120-132.}

\bibitem[Gale and Shapley(1962)]{GS62}{\small Gale, D., Shapley, L.S., 1962. College admissions and the stability of marriage. \textit{American Mathematical Monthly}, 69, 9-15.}

\bibitem[Greinecker and Kah(2021)]{GK21}{\small Greinecker, M., Kah, C., 2021. Pairwise stable matching in large economies. \textit{Econometrica}, 89(6), 2929-2974.}

\bibitem[Haeringer(2018)]{H18}{\small Haeringer, G., 2018. Market Design: Auctions and Matching. Cambridge, MA: MIT Press.}

\bibitem[Hatfield and Kojima(2010)]{HK10}{\small Hatfield, J.W., Kojima, F., 2010. Substitutes and stability for matching with contracts. \textit{Journal of Economic Theory}, 145(5), 1704-1723.}

\bibitem[Hatfield et al.(2013)]{HKNOW13}{\small Hatfield, J. W., Kominers, S. D.,  Nichifor, A., Ostrovsky, M., Westkamp, A., 2013. Stability and competitive equilibrium in trading networks. \textit{Journal of Political Economy}, 121(5), 966-1005.}

\bibitem[Hatfield and Milgrom(2005)]{HM05}{\small Hatfield, J.W., Milgrom, P.R., 2005. Matching with contracts. \textit{American Economic Review}, 95, 913-935.}

\bibitem[Hoffman and Kruskal(1956)]{HK56}{\small Hoffman, A.J., Kruskal, J.B., 1956. Integral boundary points of convex polyhedra, in: Linear Inequalities and Related Systems (H. W. Kuhn and A. W. Tucker, eds.), Princeton Univ. Press, Princeton, N.J., 1956, 223-246.}

\bibitem[Huang(2021a)]{H21a}{\small Huang, C., 2021a. Unidirectional substitutes and complements. \emph{Working paper}, arXiv:2108.12572.}

\bibitem[Huang(2021b)]{H21b}{\small Huang, C., 2021b. Matching with specialists. \emph{Working paper}.}

\bibitem[Huang(2022)]{H22}{\small Huang, C., 2022. Stable matching: An integer programming approach. \emph{Theoretical Economics}, forthcoming, arXiv:2103.03418.}

\bibitem[Kelso and Crawford(1982)]{KC82}{\small Kelso, A. S., Crawford, V.P., 1982. Job matching, coalition formation and gross substitutes. \textit{Econometrica}, 50, 1483-1504. }

\bibitem[Klaus and Klijn(2005)]{KK05}{\small Klaus, B., Klijn, F., 2005. Stable Matchings and Preferences of Couples. \textit{Journal of Economic Theory}, 121, 75-106.}

\bibitem[Kojima et al.(2013)]{KPR13}{\small Kojima, F., Pathak, P.A., Roth, A.E., 2013. Matching with Couples: Stability and Incentives in Large Markets. \textit{Quarterly Journal of Economics}, 128, 1585-1632.}

\bibitem[Kominers and S\"{o}nmez(2016)]{KS16}{\small Kominers, S.D., S\"{o}nmez, T., 2016. Matching with slot-specific priorities: theory. \textit{Theoretical Economics}, 11(2), 683-710.}

\bibitem[Nguyen and Vohra(2018)]{NV18}{\small Nguyen, T., Vohra, R., 2018. Near-Feasible Stable Matchings with Couples. \textit{American Economic Review}, 108(11), 3154-3169.}

\bibitem[Nguyen and Vohra(2019)]{NV19}{\small Nguyen, T., Vohra, R., 2019. Stable Matching with Proportionality Constraints. \textit{Operations Research}, 67(6), 1503-1519.}

\bibitem[Ostrovsky(2008)]{O08}{\small Ostrovsky, M., 2008. Stability in supply chain networks. \textit{American Economic Review}, 98, 897-923. }

\bibitem[Pycia(2012)]{P12}{\small Pycia, M., 2012. Stability and preference alignment in matching and coalition formation. \textit{Econometrica}, 80(1), 323-362. }

\bibitem[Rostek and Yoder(2020)]{RY20}{\small Rostek, M., Yoder, N., 2020. Matching with complementary contracts. \textit{Econometrica}, 88(5), 1793-1824.}

\bibitem[Roth(1984)]{R84}{\small Roth, A.E., 1984. Stability and Polarization of Interests in Job Matching. \emph{Econometrica}, 52, 47-57.}

\bibitem[Roth et al.(1993)]{RRV93}{\small Roth, A.E., Rothblum, U.G., Vande Vate, J.H., 1993. Stable matching, optimal assignments and linear programming. \textit{Mathematics of Operations Research}, 18, 808-828.}

\bibitem[Roth and Sotomayor(1990)]{RS90}{\small Roth, A.E., Sotomayor, M., 1990. Two-sided Matching: A Study in Game-Theoretic Modelling and Analysis. Econometric Society Monographs No. 18, Cambridge University Press, Cambridge England.}

\bibitem[Rothblum(1992)]{R92}{\small Rothblum, U.G., 1992. Characterization of stable matchings as extreme points of a polytope. \textit{Mathematical Programming}, 54, 57-67.}

\bibitem[Schlegel(2015)]{S15}{\small Schlegel, J.C., 2015. Contracts versus salaries in matching: a general result. \textit{Journal of Economic Theory}, 159, 552-573.}

\bibitem[Schrijver(1986)]{S86}{\small Schrijver, A., 1986. Theory of linear and integer programming. Wiley-Interscience Series in Discrete Mathematics, vol. 13, John Wiley \& Sons, Chichester, UK.}

\bibitem[Sethuraman et al.(2006)]{STQ06}{\small Sethuraman, J., Teo, C.-P., Qian, L., 2006. Many-to-one stable matching: Geometry and fairness \textit{Mathematics of Operations Research}, 31, 581-596.}

\bibitem[S\"{o}nmez(2013)]{S13}{\small S\"{o}nmez, T., 2013. Bidding for army career specialities: improving the ROTC branching mechanism. \textit{Journal of Political Economy}, 121(1), 186-219.}

\bibitem[S\"{o}nmez and Switzer(2013)]{SS13}{\small S\"{o}nmez, T., Switzer, T.B., 2013. Matching with (branch-of-choice) contracts at the United States Military Academy. \textit{Econometrica}, 81(2), 451-488.}

\bibitem[Sun and Yang(2006)]{SY06}{\small Sun, N., Yang, Z., 2006. Equilibria and indivisibilities: Gross substitutes and complements. \textit{Econometrica}, 74, 1385-1402.}

\bibitem[Teo and Sethuraman(1998)]{TS98}{\small Teo, C.-P., Sethuraman, J., 1998. The geometry of fractional stable matchings and its applications. \textit{Mathematics of Operations Research}, 23(4), 874-891.}

\bibitem[Tran and Yu(2019)]{TY19}{\small Tran, N.M., Yu, J., 2019. Product-Mix Auctions and Tropical Geometry. \emph{Mathematics of Operations Research}, 44(4), 1145-1509.}

\bibitem[Vande Vate(1989)]{V89}{\small Vande Vate, J.H., 1989. Linear programming brings marital bliss. \textit{Operations Research Letters}, 8, 147-153.}

\end{thebibliography}
\end{document}